\documentclass[10pt,letterpaper]{IEEEtran}

\usepackage{amsmath}
\usepackage{amsthm}
\usepackage{amsfonts}
\usepackage{amssymb}
\usepackage{makeidx}
\usepackage{cite}
\usepackage{graphicx,bigints}
\usepackage{mathtools}
\usepackage{cases}
\usepackage{color}
\usepackage{hyperref}
\usepackage{bbm}
\usepackage[normalem]{ulem}
\usepackage{braket}
\usepackage{textcomp,gensymb}

\usepackage{enumitem}

\usepackage{epstopdf}
\usepackage{xcolor}
\usepackage{lipsum}

\usepackage{multirow}

\usepackage[makeroom]{cancel}

\makeatletter
\DeclareFontFamily{U}{tipa}{}
\DeclareFontShape{U}{tipa}{m}{n}{<->tipa10}{}
\newcommand{\arc@char}{{\usefont{U}{tipa}{m}{n}\symbol{62}}}%

\newcommand{\arc}[1]{\mathpalette\arc@arc{#1}}

\newcommand{\arc@arc}[2]{%
	\sbox0{$\m@th#1#2$}%
	\vbox{
		\hbox{\resizebox{\wd0}{\height}{\arc@char}}
		\nointerlineskip
		\box0
	}%
}
\makeatother

\newtheorem{theorem}{Theorem}
\newtheorem{corollary}{Corollary}

\newtheorem{definition}{Definition}

\newtheorem{proposition}{Proposition}
\newtheorem{remark}{Remark}

\DeclareMathOperator*{\argmin}{arg\,min}

\DeclareMathOperator{\cC}{\mathrm{C}}

\DeclareMathOperator{\cO}{\mathcal{O}}

\DeclareMathOperator{\cL}{\mathcal{L}}

\DeclareMathOperator{\bS}{\mathbb{S}}

\DeclareMathOperator{\SIR}{\textrm{SIR}}
\DeclareMathOperator{\bR}{\mathbb{R}}

\DeclareMathOperator{\bP}{\mathbf{P}}
\DeclareMathOperator{\ind}{\mathbbm{1}}
\DeclareMathOperator{\bE}{\mathbf{E}}

\newcommand*\diff{\mathop{}\!\mathrm{d}}

\newcommand*\nnb{\nonumber}

\definecolor{sandy}{HTML}{E6E2AF}
\definecolor{stone}{HTML}{A7A37E}
\definecolor{beach}{HTML}{EFECCA}
\definecolor{ocean}{HTML}{046380}
\definecolor{diver}{HTML}{002F2F}

\definecolor{awesome}{rgb}{1.0, 0.13, 0.32}

 \definecolor{azure}{rgb}{0.0, 0.5, 1.0}

 \definecolor{dollarbill}{rgb}{0.52, 0.73, 0.4}

\definecolor{Firenze1}{HTML}{468966}
\definecolor{Firenze2}{HTML}{FFF0A5}
\definecolor{Firenze3}{HTML}{FFB03B}
\definecolor{Firenze4}{HTML}{B64926}
\definecolor{Firenze5}{HTML}{8E2800}
\definecolor{mediumpersianblue}{rgb}{0.0, 0.4, 0.65}
\definecolor{hongik}{HTML}{004498}
\definecolor{cobalt}{rgb}{0.0, 0.14, 0.86}
\definecolor{burntorange}{rgb}{0.8, 0.33, 0.0}
\definecolor{hongik}{HTML}{004498}

\definecolor{guppiegreen}{rgb}{0.0, 1.0, 0.5}

\definecolor{ultramarineblue}{rgb}{0.25, 0.4, 0.96}

\title{Satellite Infrastructure Sharing:\\ Orbit-Structured Stochastic Geometry Modeling and Connectivity Analysis in Heterogeneous Satellite Networks}
\author{Chang-Sik Choi~\IEEEmembership{Member~IEEE}
	\IEEEcompsocitemizethanks{\IEEEcompsocthanksitem{Chang-Sik Choi is an Assistant Professor at School of EE, KAIST, South Korea. email: changsik@kaist.ac.kr}}
}
\begin{document}
	\hypersetup{pageanchor=false}

	\maketitle
\begin{abstract}
This paper develops an analytical framework to evaluate the feasibility and performance of satellite infrastructure sharing among multiple low Earth orbit (LEO) satellite operators. Motivated by the growing demand for universal connectivity under limited satellite resources, the proposed model captures uncoordinated deployments where independently operated constellations coexist without predefined orbital agreements. To describe such heterogeneous configurations, the spherical Cox point process is employed to jointly generate orbital structures and satellites. Then, each satellite is further assigned a random communication range, reflecting variations in coverage capability across operators. The overall coverage region is modeled through a spherical Cox-Boolean model that captures the spatial overlap of individual satellite spherical footprints on Earth. Using the proposed framework, the feasibility and benefits of satellite infrastructure sharing are mathematically analyzed, and closed-form expressions are derived for key performance metrics such as the connection probability, connection number, and downlink signal characteristics including the nearest serving distance, total received signal power, and the signal-to-interference ratio (SIR) distribution in the interference-limited regime. The analytical results, validated through system-level simulations, provide a tractable characterization of how orbital geometry governs coverage, connectivity, and interference, and reveal the inherent trade-offs induced by coverage overlap in heterogeneous satellite constellations.

\end{abstract}

	\begin{IEEEkeywords}
		Stochastic geometry, heterogeneous satellite networks, satellite infrastructure sharing, satellite handover
	\end{IEEEkeywords}

\section{Introduction}

\IEEEPARstart{S}{atellite} networks are capable of providing universal coverage to mobile devices located anywhere on Earth~\cite{6934544,8700141,9275613,9861699,9941481}. By establishing connections between satellites, gateways, and mobile users, satellite infrastructure can deliver seamless service to user terminals on Earth~\cite{8002583}. In particular, for low Earth orbit (LEO) satellite constellations, the association time of ground-to-space links is relatively short due to the high orbital velocity of the satellites. Consequently, a large number of satellites are required to provide reliable and continuous services, especially when serving dense user populations. In practice, systems such as Starlink and Kuiper are planned to operate with tens of thousands of satellites to achieve near-global coverage, while others deploy only a limited number due to cost or regulatory constraints\cite{FCCKuiper,FCCBoeing,FCCStarlink,9371230}.

However, achieving reliable universal coverage remains challenging due to the enormous deployment scale and the spatial and temporal nonuniformity of user demand~\cite{9210567,okla}. As satellite networks continue to expand, multiple operators are expected to coexist in overlapping orbital altitudes, often using distinct constellations with different coverage footprints~\cite{9210567,9275613,9852737}. This natural coexistence motivates the study of \emph{satellite infrastructure sharing}, where satellites operated by independent providers may collectively contribute to global connectivity. Understanding the quantitative benefit of such sharing—especially under random, uncoordinated orbital distributions—is essential for future non-terrestrial network (NTN) design~\cite{6934544,8700141,9275613,9861699}.

Despite the growing interest in the collaborative use of satellite resources, the quantitative impact of multi-operator satellite sharing on coverage performance remains largely unexplored. This paper addresses this gap by developing a tractable analytical model, based on stochastic geometry~\cite{daley2007introduction,chiu2013stochastic,baccelli2010stochastic,haenggi2012stochastic}, that characterizes the spatial distribution of shared constellations and evaluates the resulting spherical coverage regions formed on the Earth’s surface by satellites located on various orbits.

\subsection{Related Work}
Recent advances in NTNs and the rapid deployment of LEO and medium Earth orbit (MEO) constellations have significantly accelerated research on the modeling and performance evaluation of satellite communication systems. With the growing number of operators and constellations, the spatial distribution of satellites and their mutual interference have become critical factors determining network reliability and scalability. Numerous studies have sought to analyze the coverage, connectivity, and interference characteristics of such satellite systems by employing both deterministic orbital simulations and analytical models~\cite{baccelli2010stochastic,haenggi2012stochastic}. While deterministic analyses can capture precise orbital mechanics, they often lack tractability when evaluating large-scale, multi-constellation networks. Consequently, analytical approaches that can statistically characterize the spatial distribution of satellites and quantify typical network performance are essential tools for system-level understanding and design optimization~\cite{11159552}.

Stochastic geometry has been used extensively for the analysis of various network architectures including cellular networks~\cite{6042301,6171996}, ad hoc networks~\cite{5226963}, and vehicular networks~\cite{8340239,8419219,9354063}. Among existing stochastic geometry models for satellite communications, the binomial point process and the Poisson point process have been predominantly employed to characterize the random spatial distribution of satellites on a sphere~\cite{9079921,9177073,9218989}. The binomial model assumes a fixed number of satellites uniformly distributed over the spherical satellite surface, while the Poisson model captures uniform random deployments with variable population, providing a more flexible mathematical framework. Leveraging these models, numerous studies have analytically investigated the large-scale behavior of satellite networks, enabling the derivation of typical user performance without resorting to exhaustive system-level simulations. For instance, stochastic geometry has been applied for evaluating downlink communications under various fading conditions~\cite{9678973}, analyzing the coverage probability and inter-satellite interference~\cite{kim2024spectrum}, deriving non-orthogonal downlink communications~\cite{10901947}, and assessing the localization through satellites~\cite{10994488}.

In a related context, stochastic geometry has also been applied to airborne and UAV-based communication systems, where nodes are typically modeled over planar or finite regions using point processes such as the 3D binomial or Poisson point processes\cite{7967745,8713514,8654644,8681266,9252889,9785498,10050345}. In these models, the coverage region of UAVs is commonly represented by disk-shaped regions determined by communication ranges or antenna patterns, enabling tractable analysis of coverage probability, typical interference, SIR distribution of the typical receiver, mobility pattern, or connectivity through airborne network elements. While these approaches provide useful insights into aerial networks, they do not directly extend to satellite systems. In particular, UAV-based models do not incorporate orbital constraints, and thus lack the structural dependence induced by satellites being confined to orbital planes. As a result, coverage and connectivity in UAV networks are primarily governed by spatial randomness within a finite region, whereas satellite systems exhibit orbit-induced spatial structure on a spherical surface. This distinction motivates the need for models that explicitly capture orbital geometry.

Although the binomial and Poisson point process models have been useful for analyzing the large-scale behavior of satellite communication performance, the orbital foundation and clustering of satellites cannot be captured by these random point processes because they lack explicit orbital structure \cite{10557592}. In light of these limitations, particularly the absence of orbit-induced structure highlighted above, the Cox point process on the spherical surface was developed by~\cite{10410220,10557592,10703111} to explicitly incorporate orbital geometry into the modeling framework. By simultaneously generating the orbits and then distributing satellites only along those orbital planes, the Cox point process successfully characterizes the underlying orbital structure of LEO and MEO constellations in a spatially consistent manner, exploiting the conditional structure of the developed Cox process. The Cox point process has since been used to model and analyze various applications, including downlink communications~\cite{10703111}, data harvesting or sensing architectures~\cite{10436110}, and new NTN designs leveraging aerial platforms \cite{10771991}. Some recent studies have also employed the Cox point process to derive the downlink signal-to-interference ratio (SIR) under a general fading assumption~\cite{subCox1}.

It has been shown that the Cox model can provide a useful approximation of existing and forthcoming satellite constellations~\cite{10410220,10703111}. The moment-matching method has proven to be a reliable tool for locally approximating practical satellite constellations with analytical point-process models such as the binomial or Poisson point processes~\cite{9079921,9177073,9218989}. Likewise, the Cox model employs this method to approximate a real constellation, however, a strong advantage of the Cox model over the binomial and Poisson models is that it not only approximates the satellite distribution itself but also replicates the orbital foundation  at the same time~\cite{10410220,10703111}. Consequently, the Cox model can approximate not only the total number of satellites in a system but also local geometric structure of the underlying orbit geometry. This flexibility is in particular crucial for representing multi-constellation heterogeneous satellite systems, where the orbital foundations are uncoordinated or even seemingly random across network operators. In such settings, the system becomes more random and isotropic, and the Cox model accurately captures both the global layout and the clustering of satellites along their orbits.

This paper builds on these insights by proposing a modeling framework to analyze satellite infrastructure sharing, where the clustering of satellites on their orbital planes plays a key role in determining network performance. To this end, we adopt the Cox point process to model and describe satellite networks composed of multiple operators whose orbital parameters are not coordinated or predetermined. Under this setup, we investigate the performance of satellite infrastructure sharing and reveal how the random and uncoordinated orbital foundations influence the achievable coverage and connectivity of the overall system by leveraging theories of random closed sets under the stochastic geometry framework~\cite{molchanov2005theory,chiu2013stochastic,baccelli2010stochastic}.

In particular, we characterize geometric connectivity metrics such as the connection probability and the connection number, which quantify coverage availability and redundancy from a typical user’s perspective. We further connect this geometric formulation to communication-level performance by analyzing the nearest serving distance, the total received signal power, and a baseline SIR under a nearest-association rule. To assess the practical relevance of the proposed framework, we also compare the resulting geometric metrics with those obtained from a down-sampled realistic LEO constellation, where the Cox parameters are calibrated using visibility statistics observed from the typical user. These results provide insight into how orbit-induced geometry shapes geometric coverage, coverage overlap, and downlink communications in heterogeneous satellite networks. To the best of the authors’ knowledge, this is the first work that models and systematically analyzes satellite infrastructure sharing in a unified framework based on geometric coverage cells over multiple orbits.

\subsection{Theoretical Contributions}

This paper makes the following contributions.

First, we develop an orbit-structured stochastic geometry framework for modeling heterogeneous satellite networks under infrastructure sharing. To capture the lack of coordination across multiple operators, we employ a Cox point process on the sphere, where orbital planes are generated in an isotropic manner and satellites are randomly distributed along these orbits. On top of this construction, each satellite is assigned an independent communication range, which induces a collection of spherical coverage cells on the Earth surface. The resulting structure forms a Boolean-type model driven by the Cox point process, enabling a systematic geometric representation of the total coverage region created by multiple uncoordinated constellations.

Second, we introduce and analyze geometric connectivity metrics that explicitly reflect the overlap structure induced by orbit-constrained deployments. In particular, we characterize the connection probability, defined as the probability that a typical user is covered by at least one satellite, and the connection number, defined as the number of satellites whose coverage cells include the user. The latter provides a direct measure of coverage redundancy and multi-connectivity potential under infrastructure sharing. By exploiting the conditional structure of the Cox model, we derive analytical expressions that characterize these metrics and reveal how orbit-induced clustering shapes the achievable connectivity.

Third, we establish a connection between the geometric coverage model and communication-level performance. In particular, we characterize the distribution of the nearest serving distance and the total received signal power observed at the typical user. The nearest distance distribution provides a direct geometric proxy for the received signal strength, while the total received signal power reflects the aggregate contribution of multiple covering satellites under infrastructure sharing. These metrics bridge the gap between coverage geometry and link-level performance in a tractable manner. We further consider a baseline SIR evaluation in which the serving satellite is selected as the nearest covering satellite, while the remaining covering satellites act as interferers. This allows us to examine how coverage overlap induced by orbital geometry translates into interference and affects link quality. In addition, we validate the geometric accuracy of the proposed model through a comparison with a down-sampled realistic LEO constellation, where the Cox parameters are calibrated using visibility statistics observed from the typical user. The comparison shows that the Cox model captures the overall trend and scale of key geometric metrics, supporting its use as a first-order approximation of realistic orbit-structured deployments.

Finally, the analytical framework provides explicit dependence of key performance metrics on system parameters such as the number of orbital planes, the number of satellites per orbit, and the communication range. This enables tractable evaluation of satellite infrastructure sharing in heterogeneous satellite networks by offering insights into overall coverage, geometric overlap, typical downlink communications.

\section{System Model}\label{S:2}
We consider heterogeneous satellite networks in which multiple network operators coexist, and numerous satellites are deployed independently by each operator. In the proposed satellite infrastructure sharing scenario, a network user on Earth can communicate freely with any satellite of any operator, provided that it lies within the satellite’s communication range. This range is determined by various factors, including the transmit power, orbital altitude, and antenna characteristics of the satellites. The specific assumption on the communication range will be detailed in Section~\ref{section:range}.
\subsection{Stochastic Geometry Model For Satellite Locations}

\begin{figure}
	\centering
	\includegraphics[width=1\linewidth]{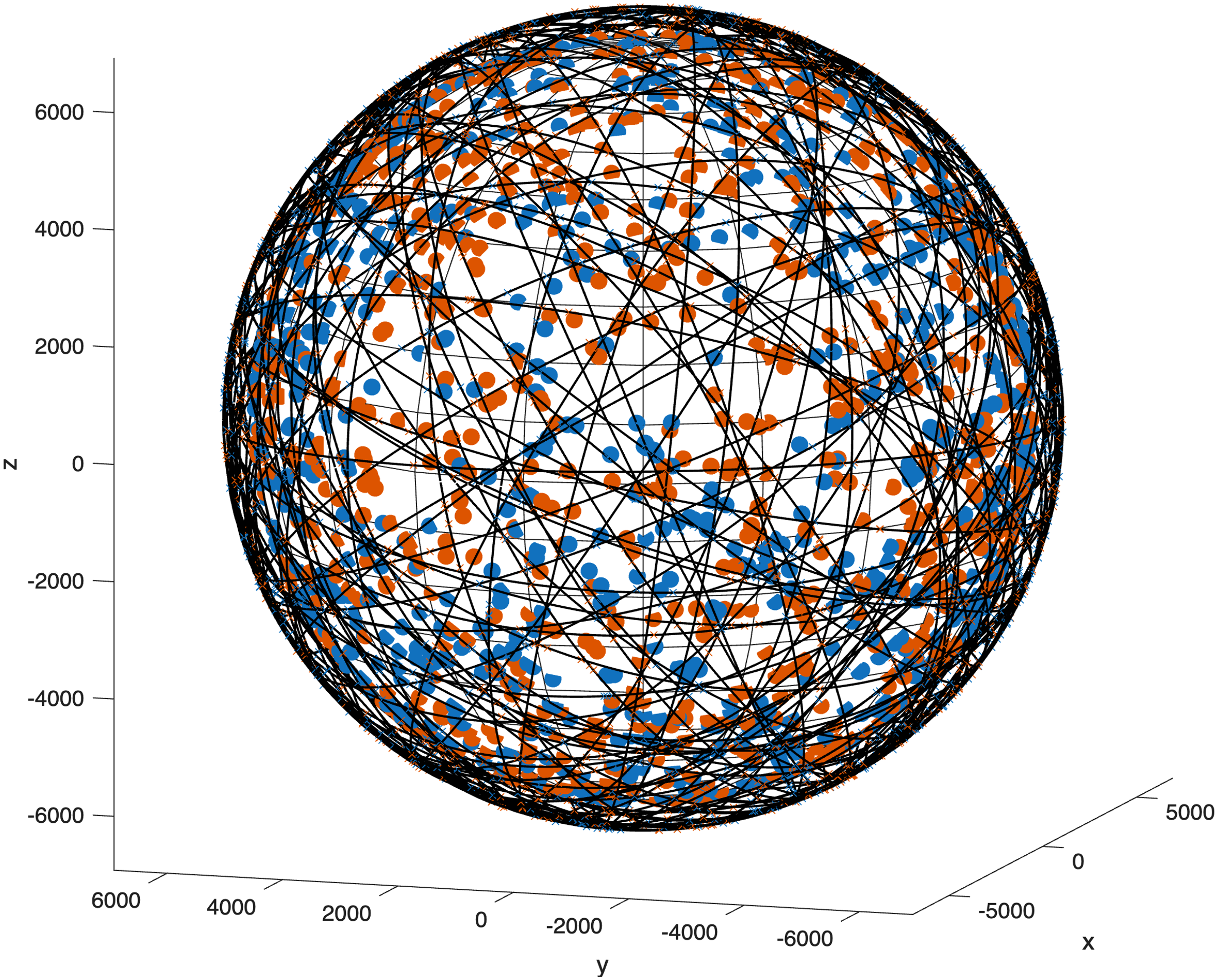}
	\caption{Illustration of satellite sharing with two operators. In total, $90 $ isotropic orbital planes and $30 $ satellites per orbit on average. The cells are described as circles on Earth sphere.}
	\label{fig:booleancox2op9030}
\end{figure}


To accommodate the real-world observation that orbital foundations from various operators coexist without predefined coordination,\footnote{A low level of coordination must exist to prevent collisions, yet a high level of coordination aimed at maximizing joint coverage does not exist.} we adopt the spherical Cox point process model developed in~\cite{10410220,10557592,10703111}, in which both orbits and the satellites on them are isotropically distributed. The Cox model captures the underlying spatial structure of LEO and MEO satellite networks by reflecting the correlation of satellites along their orbits. Specifically, by generating orbital foundations and then populating satellites exclusively along these orbits, the model ensures that satellites are distributed in a spatially consistent manner.

Specifically, the isotropic orbit process $\cO$ is given by a Poisson point process $\Xi$ of intensity $\lambda\sin(\phi)/2\pi$ in a two-dimensional torus $\mathbb T:(\theta,\phi)= [0,\pi]\times [0,\pi]$. Then, each point of $\Xi$ with its coordinates  $(\theta,\phi)$ produces an unoriented orbit $O(\theta,\phi)$ in the Euclidean space $\bR^3$ with its radius $r_s$ (km), longitude $\theta$ (rad), and inclination $\phi$ (rad). The orbit process is denoted by
\begin{equation}
	\cO = \sum_{i} O_i = \sum_{i} O(\theta_i,\phi_i) .
\end{equation}
Conditionally on the orbit process, the phases of satellites on the $i$-th orbit are assumed to be distributed as a Poisson point process $\psi_{i}$ of intensity $\mu/2\pi$ on $[0,2\pi]$. The parameter $\mu$ is the expected number of the satellites on each orbit. The satellite point process on $i$-th orbit is denoted by $\psi_i = \sum_{j}\delta_{X_{i,j}},$
where $\delta_{X_{i,j}}$ denotes the Dirac measure at $X_{i,j}\in\bR^3$ and $X_{i,j}$ denotes the $j$-th satellite point on the $i$-th orbit. Using the longitude $\theta_i$ and the inclination angle $\phi_i$, and the phase $\omega_{j}$ of the $j$-th satellite on this orbit, the $(x,y,z)$-coordinates of satellite $X_{i,j}$ is given as follows:
\begin{align}
	x&=r_{s}\sqrt{\cos^2(\omega_j)+\sin^2(\omega_j)\cos^2(\phi)}\cos(\hat{\theta}+\theta_{i}),\label{vecx}\\
	y&=r_{s}\sqrt{\cos^2(\omega_j)+\sin^2(\omega_j)\cos^2(\phi)}\sin(\hat{\theta}+\theta_{i}),\label{vecy}\\
	z&=r_{s}\sin(\omega_j)\sin(\phi)\label{vecz},
\end{align}
respectively, where $\hat{\theta}= \arctan2(\sin(\omega_j)\cos(\phi_i),\cos(\omega_{j})).$

Finally, the satellite Cox point process $\Psi$ is on the sphere $\bS=\{(x,y,z)\in\bR^3|x^2+y^2+z^2=r_s^2\} $ and it is denoted by
\begin{equation}
	\Psi = \sum_{i} \psi_i
\end{equation}
where $\cO$ denotes the orbit process and $\psi_{i}$ denotes the satellite point process on the $i$-th orbit. Since the orbits are isotropically distributed, namely invariant with respect to (w.r.t.) Earth rotation, and the satellite point processes $\psi_i$ on orbit $\{O_i\}_{i}$ are rotation invariant along their orbits, the final satellite point process $\Psi$ is rotation invariant \cite{10557592,10703111}. Fig. \ref{fig:booleancox2op9030} shows the proposed Cox point process where orbital foundations are highlighted as black lines and satellites are exclusively located on those orbits. We use $\lambda=90$, $\mu=30$, $r_s=6921$ km, and $r_e=6371$ km.

\begin{remark}
	Since the satellites are all at the same altitude, the speeds of satellites on their orbits are the same. At time zero, the satellites on each orbit are given by a Poisson point process, and they rotate around the Earth on their orbits with the same speed. Therefore, by the displacement theorem \cite{baccelli2010stochastic}, the satellite point process at any given time has the same distribution as the satellite point process at time zero for each orbital plane; in other words, the satellite Poisson point process on each orbit is time-invariant. Moreover, since the satellite Cox point process is the collection of Poisson satellite point processes on orbits, the satellite Cox point process $\Psi$ is also time-invariant. The proof of this property can be found in \cite{10410220}.
\end{remark}


\subsection{Spatial Model for User Locations}
When satellite infrastructures are shared among network operators, any satellite from any operator may serve users as long as they fall within its coverage range. Depending on the application, the spatial distributions of network users associated with different operators can be heterogeneous. To generalize the analysis presented in this paper, we assume that network users are randomly distributed on the Earth’s surface, denoted by $\bS_{e} = \{(x, y, z) \in \bR^3 \mid x^2 + y^2 + z^2 = r_e^2\},$ where $r_e$ is the Earth’s radius. Specifically, user locations are modeled as a Poisson point process on \( \bS_e \) with intensity parameter $N_u$ for mathematical tractability.\footnote{In both binomial and Poisson point processes, points are uniformly distributed on a sphere. The binomial point process can be regarded as a special case of the Poisson point process when the number of points is fixed rather than Poisson distributed.} The user point process is isotropic.

Owing to the joint isotropy and independence of the satellite and user point processes, we can focus on a typical user located at the North pole, $u_0 = (0, 0, r_e)$. The network performance is then evaluated from the perspective of this typical user rather than from the perspective of each and every user individually. In fact, the performance observed by the typical user---such as the geometric coverage profile of the typical user---corresponds to the spatial average of the geometric coverage profiles observed from all users on Earth.

\subsection{Geometric Coverage Cells of Satellites}\label{section:range}
When the satellite infrastructure is shared, satellites provide downlink signals to users located within their communication ranges, or equivalently, to users inside their respective coverage cells. Each satellite cell—representing its geometric coverage—is defined as a symmetrical circular region on the Earth’s spherical surface produced by the satellite’s footprint,  motivated by practical observations~\cite{38821,38901}. In this paper, we assume that each satellite can provide downlink service to users up to a distance denoted by $\kappa$. The variable $\kappa$ depends on several implementation factors, including the transmit antenna aperture and the satellite transmit power.

To account for heterogeneous satellite networks with various link configurations, we assume that the random variable $\kappa $ is an i.i.d. uniform between two constants $c_m$ and $c_M$, where we put restriction based only on spherical physics as follows: $c_m = r_s - r_e$ and $c_M < \sqrt{r_s^2 - r_e^2}$. Let $\cC_{i,j}$ denote the geometric coverage cell of the satellite located at $X_{i,j}$. Based on the above description, we have
\begin{equation}
	\cC_{i,j} = \{(x, y, z) \in \bS_e | \|(x, y, z) - X_{i,j}\| \leq \kappa_{i,j}\},
\end{equation}
where $\kappa_{i,j}$ is the communication range of the satellite $X_{i,j}$. The total geometric coverage region---representing the union of all geometric coverage cells produced by the satellites---is given by
\begin{equation}
	\cC = \bigcup\limits_{X_{i,j}\in\Psi} \cC_{i,j}\label{eq:cell}.
\end{equation}
Note that the total coverage set accounts for the overlap of cells by taking the union of sets, instead of their arithmetic summation.
Since the cell centers are the radial projection of the satellite Cox point process $\Psi$ and the coverage cells are i.i.d. random sets, the total coverage set \(\cC\) is therefore a spherical Cox-Boolean model. See \cite{baccelli2010stochastic} for an example of a planar Boolean model centered on a two-dimensional Poisson point process.

\begin{definition}\label{D:1}
	The independent random variable of the communication range $\kappa_{i,j}$ of the satellite $X_{i,j}$ is a mark of the satellite point process. Then, the satellite point process providing geometric coverage to the typical user is characterized as a thinned Cox point process as follows:
	\begin{equation}\label{eq:13}
		\Psi^p = \sum_{X_{i,j}\in\Psi} \delta_{X_{i,j}}\epsilon_{X_{i,j}},
	\end{equation}
	where $\epsilon_{X_{i,j}} = \ind_{\kappa_{i,j}>\|X_{i,j}-u_0\|}$. Each satellite point is retained if its mark is greater than the distance from the typical user to the point.
\end{definition}

\subsection{Propagation Model for Receive Signal and Interference}
To obtain the received signal power from a satellite to the typical user, we assume that the received signal power of the typical user at the distance $d$ from the satellite transmitter $X_{i,j}$ is given by $p_{i,j}G_{\text{tx};i,j}G_{\text{rx}}H_{i,j}(1/16\pi^2\lambda_{\text{carrier}}^2)d^{-\alpha},$
where $p_{i,j}$ is the transmit power of the satellite, $G_{tx;i,j}$ is the transmit antenna gain of the satellite, $G_{rx}$ is the receive antenna gain of the typical user, $\lambda_{\text{carrier}}$ is the wavelength of the carrier frequency, $H_{i,j}$ is an i.i.d. random variable for small-scale fading for signal fluctuation, and $\alpha$ is the path loss exponent ($\alpha\geq 2$).

\subsection{Performance Metric}\label{S:2E}
This paper analyzes the performance of satellite infrastructure sharing by first examining the underlying geometric structure of the coverage region and then relating it to communication-level performance. We begin with geometric metrics such as the connection probability and the connection number, which characterize whether a typical user is covered and how many satellites simultaneously provide coverage. These quantities capture the spatial overlap and redundancy induced by orbit-constrained deployments, and provide a direct description of the coverage structure experienced by the user.

Building on this geometric foundation, we then consider metrics that more directly reflect communication performance, including the association distance, the total received signal power, and a baseline SIR under a nearest-association rule. The association distance provides a geometric proxy for the strength of the desired signal, while the total received signal power and SIR account for the aggregate contribution of multiple covering satellites and the resulting interference. This progression from geometry to communication enables a systematic interpretation of how orbit-induced coverage structure influences link-level performance in heterogeneous satellite networks.

\subsubsection{Connection Probability}
First, we evaluate the size of the total geometric coverage relative to the entire Earth surface, by evaluating the connection probability. It is defined by the probability that the typical user is located within the geometric coverage cell of any satellite, at any given time. Thanks to the rotation invariance and independence, the connection probability of the typical user corresponds to the spatial average of the connection probability of all users in the network. The connection probability of the typical user is given by $	p_c =\bP(u_0\in\cC),$
where $\cC$ is the geometric coverage cell in Eq.~\eqref{eq:cell}. Note $\cC$ is the \emph{union} of the satellite cells, not the arithmetic summation of geometric coverage cells. The connection  probability takes a value between zero and one where one corresponds to the case when the union of all satellite cells is equivalent to the Earth sphere.

\subsubsection{Connection Number}
We consider the connection number which is the number of satellite geometric cells that contain the typical user at any given realization. This metric is linked to the number of simultaneous connections that the typical user can make in the proposed heterogeneous satellite network. The connection number is defined as a nonnegative discrete random variable $N_{c} = \sum_{X_{i,j}\in\Psi}\ind_{u_0\in \cC_{i,j}},$	where $\ind_{u_0\in \cC_{i,j}}$ is an indicator function and it takes the value of one if the event $\{u_0\in\cC_{i,j}\}$ occurs and zero otherwise.
	We evaluate the connection number by deriving its Laplace transform $\cL_N(t)=\bE[\exp(-{tN_{c}})]$. The average connection number $n_c$ is given by $  	n_{c}= \bE[N_{c}]$. Note the connection number and its expectation account for the number of potential handover candidates, and they indirectly demonstrate the robustness of the satellite infrastructure sharing in heterogeneous LEO satellite networks.

\subsubsection{Association Distance}
Each geometric coverage cell theoretically characterizes the boundary of the service area that exhibits non-negligible received signal strength on the Earth surface. Assuming that the typical user is associated with its nearest LEO satellite in coverage, the association distance is defined as the distance from the typical user to its nearest LEO satellite. By taking into account the geometric coverage cells, the association distance $d_m$ is given by $	\argmin_{X_{i,j}\in\Psi}\|X_{i,j}-u_0\|,$ which will be defined properly in Section~\ref{s:3-3} as a conditional expression.

\subsubsection{Received Signal and Total Received Powers}
The average received signal power is defined as the average power received from the associated satellite, which is the nearest to the typical user among all satellites offering coverage cells. On the other hand, the total received power is the arithmetic summation of all received signal powers from the satellites whose geometric coverage cells contain the typical user. The received signal power and total received power will be formally defined in Section~\ref{s:3-4}.
\subsubsection{SIR Distribution}
Based on the proposed geometric coverage cell model, we analyze the SIR of the typical user by linking the geometric coverage structure to communication performance. We assume that the typical user is associated with the nearest satellite whose coverage cell includes the user, while the remaining satellites that also provide coverage act as interferers. Under this nearest-association rule, we evaluate the complementary cumulative distribution function (CCDF) of the SIR experienced by the typical user. Since the sharing scenarios of interest involve multiple operators reusing the same spectrum---so that several satellites simultaneously cover the typical user---we adopt an interference-limited formulation and use the SIR as the link-level metric. The justification of this assumption and the regime in which it holds are made explicit in Section \ref{S:3E}.

Table \ref{Table:1} summarizes key variables and their descriptions.

\begin{table}\centering
	\caption{Notation and Description}\label{Table:1}
	\begin{tabular}{|c|c|}
		\hline
		Notation	&  Description \\
		\hline
		$\Xi$	&  Poisson point process on $\mathbb{T}$ \\
		\hline
		$\cO(\theta,\phi)$	& orbit with longitude $\theta$ and inclination $\phi$ \\
		\hline
		$\cO$	& Poisson orbit process in $\bR^3$\\
		\hline
		$r_s$   & orbit radius \\
		\hline
		$r_e$   & Earth radius \\
		\hline
		$\lambda$ & mean number of orbits \\
		\hline
		$\mu$ & mean number of satellites per orbit\\
		\hline
		$\Psi$	& satellite point process\\
		\hline
				$\psi$	& satellite point process on each orbit\\
		\hline
		$u_0$	& typical user location $(0,0,r_{e})$ \\
		\hline
		$\kappa_{i,j}$	& communication range of satellite $X_{i,j}$ \\
		\hline
					$c_m$	& minimum communication range\\
		\hline
			$c_M$	& maximum communication range\\
		\hline
		$C_{i,j}$	&  geometric coverage cell of $X_{i,j}$\\
		\hline
				$C$	&  coverage set of all satellites \\
		\hline
		$\epsilon_{X_{i,j}}$ & indicator for satellite retention\\
		\hline
		$\Psi^p$	& retained satellite point process\\
		\hline
		$\exp(x)$	& exponential function $x$, namely $e^{x}$ \\
		\hline
		${F}_{\kappa}$ & CDF of $\kappa$ \\
		\hline 
		
	\end{tabular}
\end{table}

\section{Performance Analysis}\label{section_performance_analysis}
We now analyze the performance of satellite infrastructure sharing in heterogeneous satellite networks by deriving the performance metrics in Section~\ref{S:2E}.
\subsection{Connection Probability}
\begin{theorem}\label{T:1}
The connection probability of the typical user is given by 
		\begin{equation}
		1- \exp\left({-\int_{0}^{\tilde{\varphi}}{\lambda\cos(\varphi)}\left(1-e^{-\frac{\mu}{\pi}\int_{0}^{\tilde{\omega}(\varphi)}\bar{F}_{\kappa}\left(\tilde{K}\left(\varphi,\eta\right)\right)\diff \eta}\right)\diff \varphi}\right),\nnb
	\end{equation}
	where 
	\begin{align}
		&\tilde{\varphi} = \arccos((r_e^2+r_s^2-c_M^2)/(2 r_e r_s)), \label{eq:tilde varphi}\\
		&\tilde{\omega}(\varphi) = \arcsin(\sqrt{1-\cos^2(\tilde{\varphi})\sec^2(\varphi)}),\label{eq:tilde omega}\\
		&\bar{F}_{\kappa}(x) = \bP(\kappa>x) \ \forall x\in\bR,\label{eq:F_bar}\\
		&\tilde{K}(\varphi,\eta) = \sqrt{r_s^2 - 2r_s r_e \cos(\varphi)\cos(\eta)+r_e^2},\label{eq:tildeK}
	\end{align}
	respectively.
\end{theorem}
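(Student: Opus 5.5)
The plan is to write the connection probability as one minus the void probability of the thinned process $\Psi^p$ of Definition~\ref{D:1}, so $p_c = 1-\bP(\Psi^p(\bS)=0)$, and to evaluate the void probability by conditioning on the orbit process $\cO$. Given $\cO$, the processes $\psi_i$ on different orbits are independent Poisson point processes, and the marks $\kappa_{i,j}$ are i.i.d. and independent of everything else. Hence the independent thinning of $\psi_i$ with retention probability $\bar F_\kappa(\|X-u_0\|)$ is again Poisson on the orbit. Its void probability is $\exp\bigl(-\frac{\mu}{2\pi}\int_0^{2\pi}\bar F_\kappa(\|X(\omega)-u_0\|)\diff\omega\bigr)$, where $X(\omega)$ is the point of $O_i$ with phase $\omega$.

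The conditional void probability is therefore $\prod_i \exp(-\frac{\mu}{2\pi}\int\cdots)$. Averaging over the Poisson process $\Xi$ with the probability generating functional gives
\begin{equation}
\exp\Bigl(-\int_{\mathbb T}\bigl(1-e^{-\frac{\mu}{2\pi}\int_0^{2\pi}\bar F_\kappa(\|X(\omega)-u_0\|)\diff\omega}\bigr)\tfrac{\lambda\sin\phi}{2\pi}\diff\theta\diff\phi\Bigr).\nnb
\end{equation}

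The next step is a change of variables that exploits isotropy. For an orbit $O(\theta,\phi)$, the relevant geometric quantity is the angular distance $\varphi$ between $u_0$ and the orbital great circle. Measured from the north pole, this distance is $\varphi=|\pi/2-\phi|$, independent of $\theta$. Integrating out $\theta\in[0,\pi]$ gives a factor $\pi$, and mapping $\phi\in[0,\pi]$ to $\varphi\in[0,\pi/2]$ is two-to-one with $\sin\phi=\cos\varphi$. The intensity then becomes $\frac{\lambda\sin\phi}{2\pi}\cdot\pi\cdot 2\diff\varphi=\lambda\cos\varphi\diff\varphi$, which produces the outer factor in the theorem.

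On an orbit at angular distance $\varphi$, parametrize the satellite by the phase $\eta$ measured from the point of the orbit closest to $u_0$. Spherical trigonometry (the right spherical triangle with legs $\varphi$ and $\eta$) gives the cosine of the central angle between $u_0$ and $X$ as $\cos\varphi\cos\eta$. The law of cosines then gives $\|X-u_0\|=\tilde K(\varphi,\eta)$ as in \eqref{eq:tildeK}. The phase integral over $[0,2\pi]$ is symmetric, so $\frac{\mu}{2\pi}\int_0^{2\pi}$ becomes $\frac{\mu}{\pi}\int_0^{\pi}$ over $\eta$.

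The main obstacle, and the final step, is truncating the integration domains by the support of $\kappa$. Since $\kappa\le c_M$, we have $\bar F_\kappa(\tilde K)=0$ whenever $\tilde K> c_M$, that is, whenever $\cos\varphi\cos\eta<\cos\tilde\varphi$ with $\tilde\varphi$ as in \eqref{eq:tilde varphi}. So only orbits with $\varphi\le\tilde\varphi$ contribute; for the others the integrand $1-e^{0}$ vanishes. On a contributing orbit, only $\eta$ with $\cos\eta\ge\cos\tilde\varphi/\cos\varphi$ contribute, i.e.\ $\eta\le\arccos(\cos\tilde\varphi\sec\varphi)=\arcsin\bigl(\sqrt{1-\cos^2\tilde\varphi\sec^2\varphi}\bigr)=\tilde\omega(\varphi)$, matching \eqref{eq:tilde omega}. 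Substituting these limits gives the stated expression. I expect the care needed here to be in checking the orbit parametrization \eqref{vecx}--\eqref{vecz}: the Jacobian from $\phi$ to $\varphi$ and the factor of two from the two-to-one map must combine exactly into $\lambda\cos\varphi$.
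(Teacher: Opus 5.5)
Your proposal is correct and follows essentially the same route as the paper's proof in Appendix~\ref{A:1}. Both condition on the orbit process, use the Poisson void probability on each orbit (equivalently the PGFL applied to $F_\kappa$), apply the PGFL of $\Xi$, and change variables via $\varphi=|\pi/2-\phi|$ and a phase shift to $\eta$; the differences are cosmetic: you phrase the per-orbit step through independent thinning, truncate to $\varphi\le\tilde\varphi$, $\eta\le\tilde\omega(\varphi)$ at the end rather than the start, and write out the spherical right-triangle computation that the paper only alludes to.
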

\begin{IEEEproof}
	See Appendix \ref{A:1}
\end{IEEEproof}

Theorem \ref{T:1} provides a closed-form expression for the connection probability, defined as the probability that the typical user is covered by {at least} one satellite's geometric coverage. The result takes the exponential form reflecting the void probability of the underlying point process and shows how connectivity emerges from the superposition of individual coverage cells.

It is important to note that the closed-form formula in Theorem \ref{T:1} provides an efficient tool for evaluating satellite infrastructure sharing across diverse scenarios, in contrast to the time-consuming system-level simulation approach. Obtaining even a single data point via Monte Carlo simulation requires generating a very large number of instances, which becomes particularly costly when system parameters such as $\lambda$ the number of orbits or $\mu$ the number of satellites per orbit are large. By contrast, the analytical formula enables rapid evaluation of connection probability across multiple parameter values. This tractability allows system designers and operators to quickly assess the gains of satellite infrastructure sharing.




\begin{figure}
	\centering
	\includegraphics[width=1\linewidth]{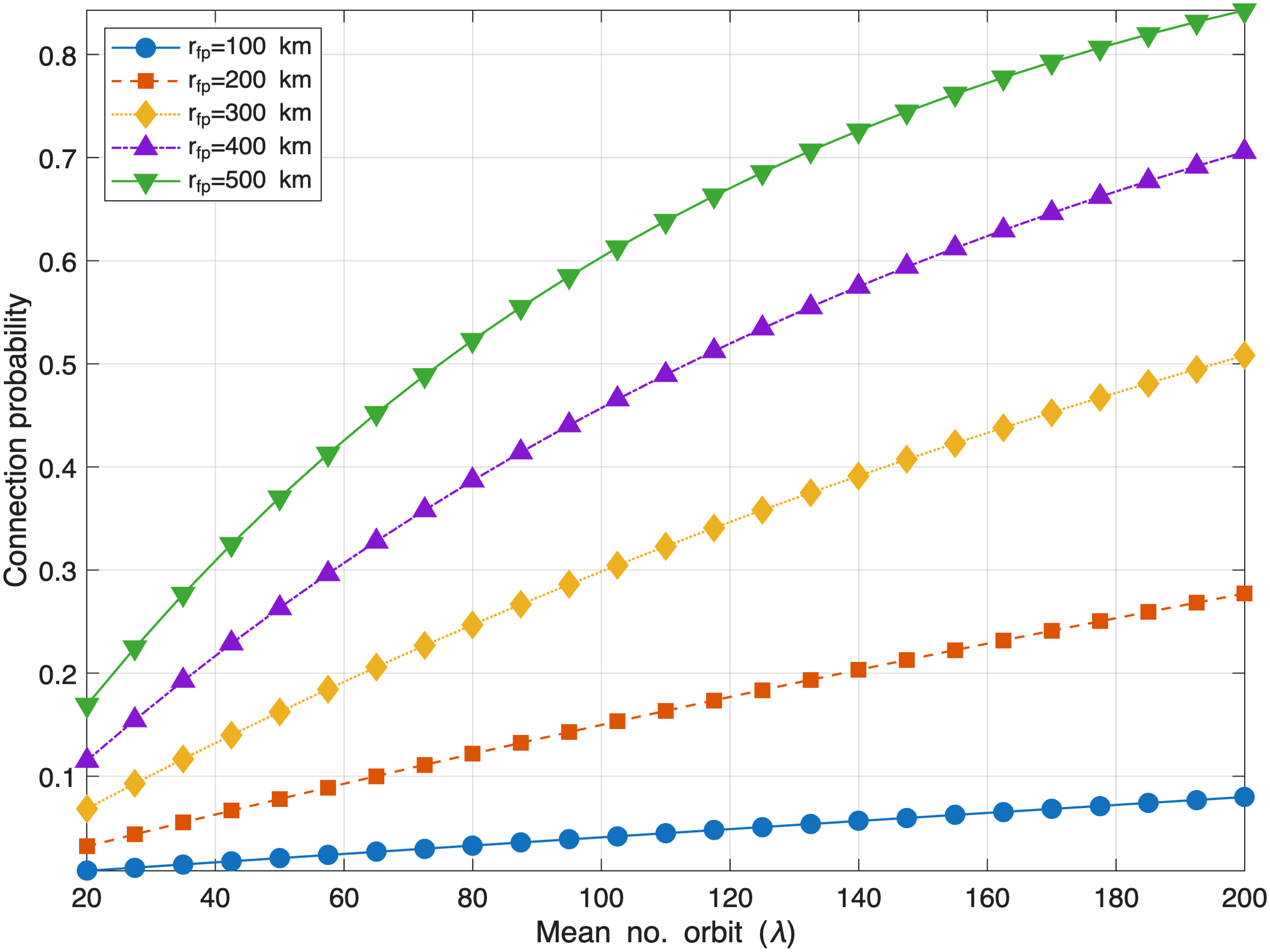}
	\caption{The connection probability of the typical user obtained by Theorem \ref{T:1} for different values of $\lambda$ and $c_M.$}
	\label{fig:theorem1example}
\end{figure}

Fig. \ref{fig:theorem1example} shows the connection probability for various values of $\lambda$ and $c_M$. Here, we set the mean number of satellites $\mu=15$ and the radius of orbit $r_s$ to be $6921$ km. In the figure, we use the footprint radii $r_{\text{fp}}$  to be  $100$ km, $200$ km, $300$ km, $400$ km, and $500$ km, which approximately give the maximum value of the range $c_M$ as $\sqrt{(r_s-r_e)^2 + (r_{\text{fp}})^2}$, resulting $c_M $ to be 559 km, 585 km, 626 km, 680 km, and 743 km.

Suppose two independent operators each deploy a constellation with $\lambda = 20$ orbital planes and the same footprint parameter. From the figure, if $r_{\text{fp}} = 300$ km, or equivalently $c_M=626$ km, the connection probability for a typical user under a single operator is about $0.07$. If the two constellations are combined through infrastructure sharing, the effective number of orbital planes becomes approximately $\lambda = 40$. From the same curve, the connection probability increases to about $0.14.$ This corresponds to roughly a two-times improvement in connectivity for the typical user, highlighting how sharing aggregates coverage opportunities across independently deployed systems.

A similar trend is observed for larger footprints. For instance, when $r_{\text{fp}} = 500$ km, or equivalently $c_M=743$ km, a single operator at $\lambda = 20$ provides a connection probability of about $0.17$, whereas combining two such operators ($\lambda \approx 40$) increases the connection probability to roughly $0.32$, representing nearly a $1.9$-times gain. This example illustrates that, for various cell range, the infrastructure sharing effectively increases the spatial density of coverage cells, leading to a substantial improvement in connectivity, particularly in regimes where individual constellations are relatively sparse.

It is worth noting that the results are obtained under the condition that all satellite orbits are isotropically distributed. This highlights the practical benefit of the proposed satellite infrastructure sharing even when satellite orbits are randomly and individually deployed. In other words, satellite infrastructure sharing improves network connectivity significantly even when the orbital planes are neither coordinated nor jointly planned by the operators of heterogeneous satellite networks---a highly likely scenario in real-world deployments\footnote{The terms “uncoordinated” and “unplanned” denote a condition in which satellite orbits are established autonomously by stakeholders, subject only to collision avoidance.}.

\subsection{Connection Number}
The connection number is a positive random variable that represents the number of satellites providing coverage to the typical user at any given time. We first analyze the Laplace transform of the connection number, which fully characterizes the probability mass function of the connection number random variable. We then analyze the expectation of it.
\begin{theorem}\label{T:2}
	The Laplace transform of the connection number is given by
	 \begin{align}
	&\exp\left(\!-\int_{0}^{\tilde{\varphi}}\!\!{\lambda\cos(\varphi)}\right.\nnb\\
	&\hspace{8mm}\left.\left(\!1-e^{-\frac{\mu}{\pi}\int_{0}^{\tilde{\omega}(\varphi)}(1-e^{-t})\bar{F}_\kappa(\tilde{K}(\varphi,\eta))\diff \eta}\right)\diff \varphi\right),\nnb
\end{align}
where $\tilde{\varphi}$, $\tilde{\omega}(\varphi)$, $\bar{F}_{\kappa}(x),$ and $\tilde{K}(\varphi,\eta)$ are given by Eqs. \eqref{eq:tilde varphi}, \eqref{eq:tilde omega}, \eqref{eq:F_bar}, and \eqref{eq:tildeK}, respectively.
\end{theorem}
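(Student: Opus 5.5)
The plan is to compute $\cL_N(t)=\bE[e^{-tN_c}]$ by conditioning on the orbit process $\cO$ and nesting two Poisson probability generating functionals: the inner one for the satellites on a given orbit, the outer one for $\Xi$. The argument reuses the geometry of Theorem~\ref{T:1}; the only change is that the void-probability integrand $\bar F_\kappa$ is replaced by $(1-e^{-t})\bar F_\kappa$. Indeed, Theorem~\ref{T:1} is the limit $t\to\infty$ of the present statement, since $p_c=1-\lim_{t\to\infty}\cL_N(t)=1-\bP(N_c=0)$.

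\textbf{Orbit-level computation.} First write $N_c=\sum_i\sum_{X_{i,j}\in\psi_i}\epsilon_{X_{i,j}}$, using the marks of Definition~\ref{D:1}. Conditionally on $\cO$, the $\psi_i$ are independent Poisson processes of intensity $\mu/2\pi$ in phase, and the $\kappa_{i,j}$ are i.i.d. Hence
\[
\bE\big[e^{-tN_c}\,\big|\,\cO\big]=\prod_i \exp\Big(-\frac{\mu}{2\pi}\int_0^{2\pi}\bE\big[1-e^{-t\ind_{\kappa>\|X(\omega)-u_0\|}}\big]\diff\omega\Big).
\]
The inner expectation equals $(1-e^{-t})\bar F_\kappa(\|X(\omega)-u_0\|)$. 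To evaluate the distance, reparametrize each orbit as in the proof of Theorem~\ref{T:1}. Let $\varphi$ be the angle between the orbital plane and the direction of $u_0$, i.e. the latitude of the orbit's closest approach to the pole, and let $\eta$ be the phase measured from that closest point. Spherical trigonometry gives $\cos(\angle u_0 O X)=\cos\varphi\cos\eta$, so by the law of cosines $\|X-u_0\|=\tilde K(\varphi,\eta)$. Because $\bar F_\kappa(x)=0$ for $x\ge c_M$, only phases with $\cos\varphi\cos\eta>\cos\tilde\varphi$ contribute. These satisfy $|\eta|\le \tilde\omega(\varphi)$, where $\cos\tilde\omega=\cos\tilde\varphi\sec\varphi$, which is exactly Eq.~\eqref{eq:tilde omega}. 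The contributing arc is symmetric in $\eta$, so the integral becomes $\frac{\mu}{\pi}\int_0^{\tilde\omega(\varphi)}(1-e^{-t})\bar F_\kappa(\tilde K(\varphi,\eta))\diff\eta$. Orbits with $\varphi>\tilde\varphi$ contribute the factor $1$.

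\textbf{Averaging over orbits.} Next apply the PGFL of the Poisson process $\Xi$ to the function $g(\theta,\phi)=\exp(-\frac{\mu}{\pi}\int_0^{\tilde\omega}\cdots)$, which gives $\exp\big(-\int_{\mathbb T}(1-g)\,\frac{\lambda\sin\phi}{2\pi}\diff\theta\diff\phi\big)$. The main obstacle is the change of variables from $(\theta,\phi)$ to the single relevant parameter $\varphi$; the result must be the measure $\lambda\cos\varphi\diff\varphi$ on $[0,\pi/2]$. I would take this directly from Appendix~\ref{A:1}. To justify it independently, identify each unoriented orbit with its unit normal $\pm n$, uniformly distributed on the sphere under the density $\sin\phi/2\pi$ with total mass $\lambda$. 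Then observe that $\varphi=\pi/2-\angle(n,e_z)$, folded to $[0,\pi/2]$. The cap-area formula shows that the folded angle $\varphi$ has density $\cos\varphi$, so the orbit count measure in $\varphi$ is $\lambda\cos\varphi\diff\varphi$. Restricting to $\varphi\in[0,\tilde\varphi]$ then yields the stated expression.

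As a consistency check, differentiating at $t=0$ should give $n_c$, and letting $t\to\infty$ should recover Theorem~\ref{T:1}.
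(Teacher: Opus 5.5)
Your proposal is correct and follows essentially the paper's route: average the i.i.d.\ marks to get the factor $1-(1-e^{-t})\bar F_\kappa$, apply the PGFL of each orbit's Poisson process conditionally on $\cO$, then apply the PGFL of $\Xi$. The paper works in $(\phi,\omega)$ and changes variables $\varphi=\pi/2-\phi$, $\eta=\pi/2-\omega$ only at the end, whereas you parametrize by $(\varphi,\eta)$ from the start and justify the $\lambda\cos\varphi\,\diff\varphi$ measure independently via the uniform unit normal. One small slip is that $\varphi$ is the angular distance (colatitude) of the orbit's closest approach to the pole, not its latitude; that is exactly what makes $\cos(\angle u_0OX)=\cos\varphi\cos\eta$ hold.
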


\begin{IEEEproof}
	See Appendix \ref{A:2}
\end{IEEEproof}
Theorem \ref{T:2} provides the Laplace transform (equivalently, the probability generating function) of the connection number, and therefore fully characterizes its distribution. Since the connection number is a nonnegative integer-valued random variable, its Laplace transform uniquely determines its probability mass function. As a result, the theorem gives complete statistical information on the number of satellites simultaneously covering the typical user, beyond only its first moment. 

To get a first-order behavior of the connection number, we analyze the expectation of the random variable in below.

\begin{theorem}\label{T:3}
	The mean connection number of the typical user is given by
	\begin{equation}
		n_c=\frac{\lambda\mu}{\pi}\int_0^{\tilde{\varphi}}\int_{0}^{\tilde{\omega}(\varphi)}\cos(\varphi)\bar{F}_{\kappa}(\tilde{K}(\varphi,\eta))\diff \eta \diff \varphi,
	\end{equation}
where $\tilde{\varphi}$, $\tilde{\omega}(\varphi)$, $\bar{F}_{\kappa}(x)$, and $\tilde{K}(\varphi,\eta)$ are given by Eqs. \eqref{eq:tilde varphi}, \eqref{eq:tilde omega}, \eqref{eq:F_bar}, and \eqref{eq:tildeK}, respectively.
\end{theorem}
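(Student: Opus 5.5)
The plan is to obtain $n_c$ directly from Theorem~\ref{T:2} by differentiating the Laplace transform at the origin, $n_c=\bE[N_c]=-\frac{\diff}{\diff t}\cL_N(t)\big|_{t=0}$. Write $\cL_N(t)=\exp(-G(t))$ with
\begin{equation}
G(t)=\int_{0}^{\tilde{\varphi}}\lambda\cos(\varphi)\left(1-e^{-(1-e^{-t})A(\varphi)}\right)\diff\varphi,
\qquad
A(\varphi)=\frac{\mu}{\pi}\int_{0}^{\tilde{\omega}(\varphi)}\bar{F}_\kappa(\tilde{K}(\varphi,\eta))\diff\eta. \nnb
\end{equation}
Then $-\cL_N'(0)=G'(0)e^{-G(0)}$, and $G(0)=0$ because $1-e^{0}=0$. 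So only $G'(0)$ needs to be computed.

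To find $G'(0)$, differentiate under the integral sign. The integrand's $t$-derivative is
\begin{equation}
\lambda\cos(\varphi)\,A(\varphi)e^{-t}e^{-(1-e^{-t})A(\varphi)}, \nnb
\end{equation}
which equals $\lambda\cos(\varphi)A(\varphi)$ at $t=0$. Substituting $A(\varphi)$ gives exactly the double integral $\frac{\lambda\mu}{\pi}\int_0^{\tilde{\varphi}}\int_0^{\tilde{\omega}(\varphi)}\cos(\varphi)\bar{F}_\kappa(\tilde{K}(\varphi,\eta))\diff\eta\diff\varphi$ claimed in the theorem.

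The only technical point is justifying the interchange of derivative and integral, together with the standard fact that the right derivative of the Laplace transform at $0$ equals minus the mean. For the first, note $0\le\bar{F}_\kappa\le 1$, $\tilde{\omega}(\varphi)\le\pi/2$, and $\varphi\in[0,\tilde{\varphi}]$ is a bounded interval. Hence $A(\varphi)\le\mu/2$, and the $t$-derivative of the integrand is bounded uniformly in $t\ge0$ by $\lambda\mu/2$. Dominated convergence then applies. For the second, $N_c$ is nonnegative and its mean is finite, as the bound just obtained shows. Monotone convergence applied to $(1-e^{-tN_c})/t\uparrow N_c$ gives $-\cL_N'(0^+)=\bE[N_c]$.

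As a cross-check, and an alternative route, I would also verify the result via Campbell's formula for the Cox process. Conditionally on $\cO$, the mean of $N_c$ is $\sum_i\frac{\mu}{2\pi}\int_0^{2\pi}\bP(\kappa>\|X_i(\omega)-u_0\|)\diff\omega$. Taking expectation over the Poisson orbit process and using the same orbit reparametrization as in Appendix~\ref{A:1} gives the same expression. In that reparametrization, $\varphi$ is the angular distance from $u_0$'s pole to the orbit plane, $\eta$ is the in-plane angle, and the factor $2$ from the symmetric $\eta$-range yields $\mu/\pi$. The main (mild) obstacle on this route would be reproducing that change of variables. The derivative route avoids it entirely by leaning on Theorem~\ref{T:2}.
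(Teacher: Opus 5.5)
Your argument is correct, but your main route differs from the paper's. The paper proves Theorem~\ref{T:3} directly from first-moment information, in these steps:
\begin{enumerate}
\item It conditions on $\Psi$ to replace $\ind_{\kappa_{i,j}>\|X_{i,j}-u_0\|}$ by $\bar{F}_\kappa(\|X_{i,j}-u_0\|)$.
\item It applies Campbell's formula to each orbit's phase process $\psi_i$, with intensity $\mu/2\pi$ on $|\omega-\pi/2|<\bar{\omega}(\phi_i)$.
\item It applies Campbell's formula again to the orbit process $\Xi$; after integrating out $\theta$ this contributes $\lambda\sin(\phi)/2$.
\item It finishes with the substitutions $\varphi=\pi/2-\phi$ and $\eta=\pi/2-\omega$.
\end{enumerate}
This is exactly the route you list only as a cross-check.

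You instead obtain $n_c=-\cL_N'(0^+)$ from Theorem~\ref{T:2}, and your justification is sound. $G(0)=0$. The $t$-derivative of the integrand is dominated by $\lambda\mu/2$ on the bounded $\varphi$-range, because $\tilde{\omega}(\varphi)\le\pi/2$. Monotone convergence of $(1-e^{-tN_c})/t\uparrow N_c$ identifies the right derivative with $\bE[N_c]$ without assuming finiteness in advance. So your remark that the mean is finite ``as the bound shows'' is a consequence of the argument, not a prerequisite.

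Your route is shorter and doubles as a consistency check between Theorems~\ref{T:2} and~\ref{T:3}. However, it inherits everything from Theorem~\ref{T:2}: the nested probability generating functionals, the geometric reduction to $\tilde{\varphi}$ and $\tilde{\omega}(\varphi)$, and the need for the limit-interchange arguments. The paper's Campbell route is more elementary and self-contained. It uses only the intensity measure of the Cox process and needs no analytic justification beyond Tonelli for nonnegative integrands. It is also the same template the paper reuses for $\bE[T]$ in Theorem~\ref{T:5}.
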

\begin{IEEEproof}
	See Appendix \ref{A:3}
\end{IEEEproof}

Theorem \ref{T:3} shows that the mean connection number scales linearly with both the average number of orbits, $\lambda$, and the average number of satellites per orbit, $\mu$. This linear dependence provides a simple geometric interpretation: increasing either the number of orbital planes or the number of satellites placed on each orbit increases the expected number of satellites covering the typical user in direct proportion. In particular, the result makes explicit how orbit density and intra-orbit satellite density jointly determine the average coverage redundancy under infrastructure sharing.

\begin{figure}
	\centering
	\includegraphics[width=1\linewidth]{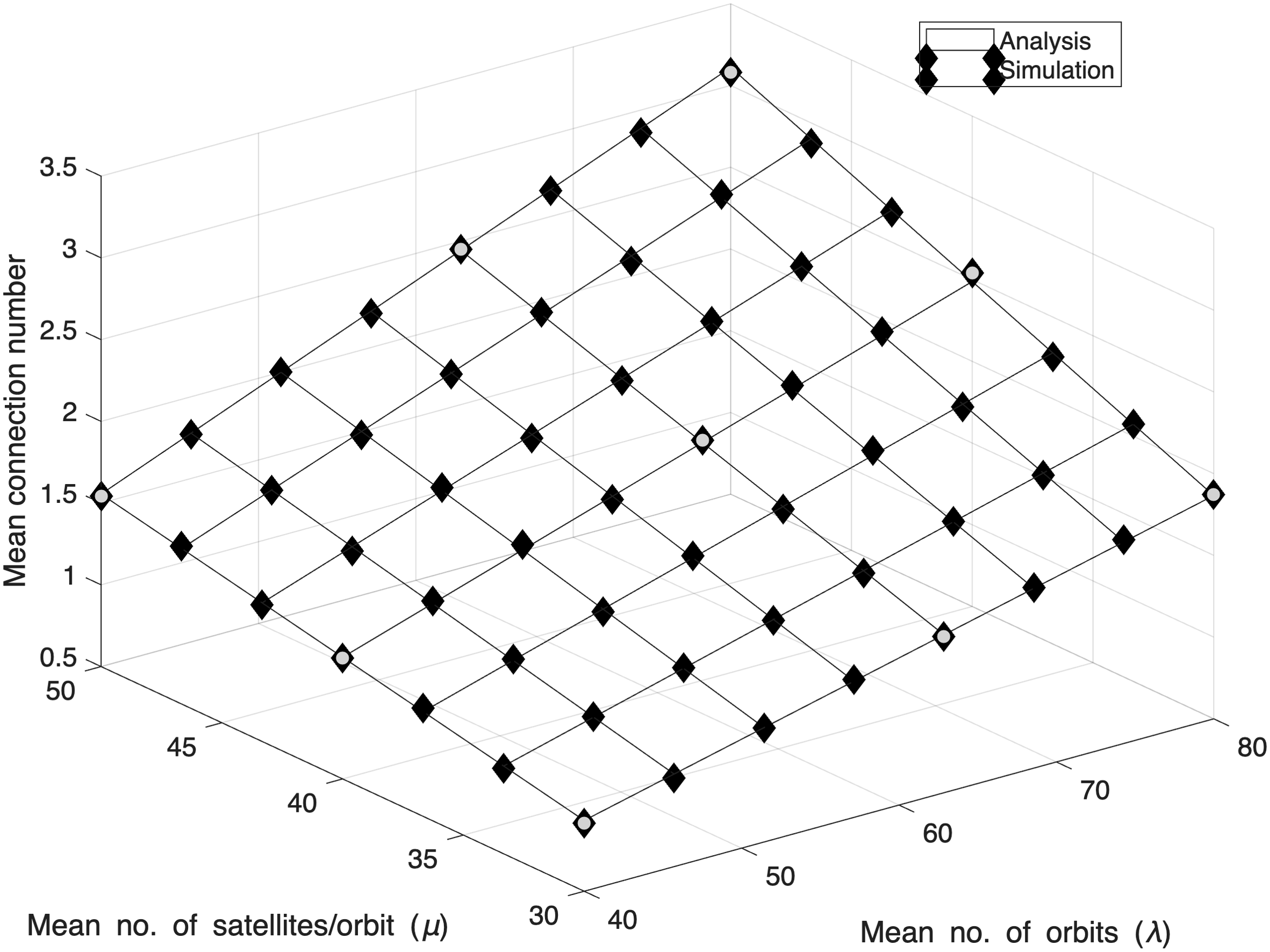}
	\caption{Illustration of the mean connection number obtained by simulation and analysis. We use $r_s=6921$ km and $c_M=770$ km. }
	\label{fig:theorem2simtran}
\end{figure}

Fig. \ref{fig:theorem2simtran} compares the mean connection number obtained from system-level simulations and from Theorem \ref{T:3}. Here, $c_M=770$ km corresponds to a footprint radius of approximately $r_{\text{fp}}\approx540$ km through $c_M \approx \sqrt{(r_s-r_e)^2+r_{\text{fp}}^2}$; this value is chosen as a representative intermediate coverage capability, lying between the footprint-limited ranges examined in Fig. \ref{fig:theorem1example} ($559$--$743$ km) and the larger ranges used in Figs. \ref{fig:theorem2example} and \ref{fig:theorem3anasim} ($916$--$1100$ km), so that the validation of Theorem \ref{T:3} covers the mid-range regime. The simulations are conducted using a Monte Carlo approach with $10^5$ independent realizations, where the connection number in each snapshot is averaged to obtain the empirical mean for given $\lambda$ and $\mu$. The close agreement between the analytical and simulation results across all parameter regimes confirms the accuracy of the derived expression.

Beyond validation, the figure reveals a clear structural trend: the mean connection number increases approximately linearly with both the mean number of orbital planes $\lambda$ and the mean number of satellites per orbit $\mu$. This behavior is consistent with the analytical result, indicating that the expected number of satellites covering the typical user scales proportionally with the overall satellite density, which is jointly determined by orbit-level and intra-orbit densities. Geometrically, increasing $\lambda $ introduces more orbital planes intersecting the Earth, while increasing $\mu$ densifies satellites along each orbit, and both effects contribute additively to coverage overlap.

In addition, the near-planar structure of the surface highlights that neither parameter dominates the other; rather, they play symmetric roles in determining connectivity. This suggests that, from a system design perspective, similar improvements in coverage redundancy can be achieved by increasing either the number of orbits or the number of satellites per orbit, providing flexibility in constellation design under practical constraints.


It is important to note that this result is derived under the isotropic Cox model, where orbital planes are assumed to be independently and isotropically distributed over the sphere. In practice, if satellite orbits exhibit structured or coordinated patterns—leading to non-isotropic configurations—the resulting infrastructure sharing gain may be reduced. Nevertheless, in scenarios where satellite systems are deployed independently by multiple operators without coordination, it is reasonable to assume that their orbital configurations are effectively independent, making the isotropic assumption a suitable first-order model. A more comprehensive assessment of this assumption is provided in Section \ref{S:5E}, where the connection number obtained from the proposed model is numerically compared with that of a down-sampled real-world constellation.


\begin{figure}
	\centering
	\includegraphics[width=1\linewidth]{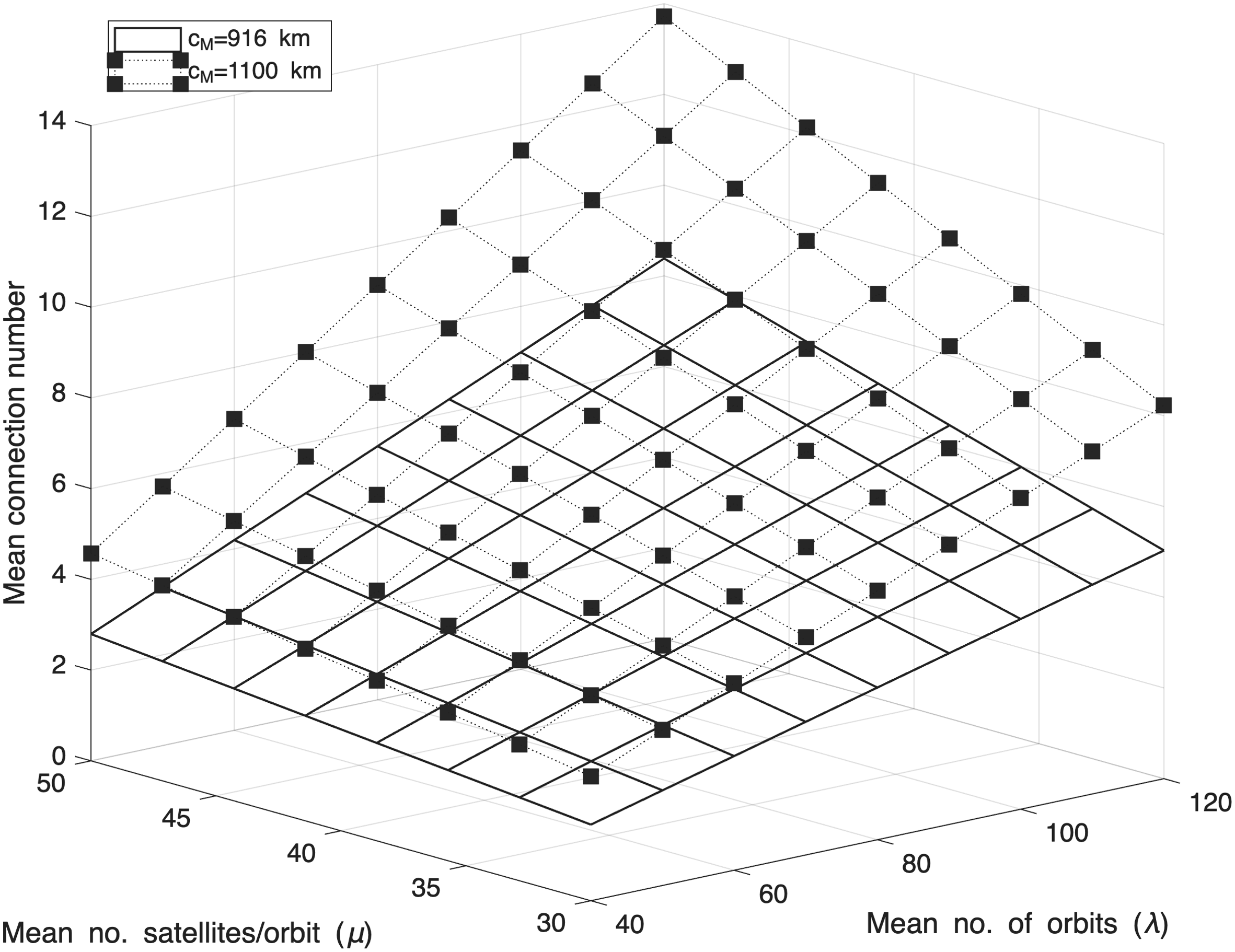}
	\caption{The mean connection number for two $c_M$ values. Specifically, $r_s=6921$ km, $c_m=550$ km, and $c_M=\{916 \text{ km}, 1100 \text{ km}\}$}
	\label{fig:theorem2example}
\end{figure}

Fig. \ref{fig:theorem2example} presents the mean connection number obtained using the formula in Theorem \ref{T:3}, with two different values of $c_M$. Here, $c_M$ denotes the maximum value that the communication range can take, while the communication range of each satellite is randomly distributed between $c_m=550$ km and the chosen $c_M$ value. The figure demonstrates that, for any $c_M$, the mean connection number increases almost linearly with both $\lambda$ and $\mu$. For example, the connection number increases from $1.7$ to $5.1$ (exactly $200\%$ improvement) when combining {three} uncoordinated satellite groups where each group is equipped with $\lambda=40$ and $\mu=30$. Similarly, for $c_M=1100$ km, the mean connection number reaches $8.23$ when $\lambda=120$ and $\mu=30$, which is approximately three times larger than the value of $2.74$ achieved by a single constellation with $\lambda=40, \mu=30$. In both cases, the performance improvement obtained from infrastructure sharing is identical to the arithmetic sum of the performances individually achieved by the three operators. For any $c_M$, this linear gain from infrastructure sharing is clearly substantiated by Theorem \ref{T:3}. Consistent with the observation made in Theorem \ref{T:1}, satellite infrastructure sharing in the heterogeneous networks increases the number of satellites within coverage, even when orbital planes are neither coordinated nor jointly planned.


\subsection{Distance to Association Satellite}\label{s:3-3}
Before we evaluate the distribution of the association distance, we now consider the communication ranges of satellites as independent marks on the satellite point process \cite{baccelli2010stochastic}.

The thinning of the point process is location-dependent thinning and yet the thinning of a point is independent of other points. Therefore, it is statistically independent and by first principle, conditionally on the orbit process, the thinned satellite point process on each orbit is a Poisson point process with modified intensity based on the retention probability~\cite{baccelli2010stochastic}. We now use the thinned point process $\Psi^{p}$ to derive the association distance. Let $Z$ be the distance from the typical user to its nearest satellite in coverage.

\begin{theorem}\label{T:4}
The CDF of the association distance $\bP(Z<z)$ is given as follows: for $z<c_m,$ $\bP(Z\leq z) =0$; for $c_m\leq z <c_M$, $\bP(Z\leq z)$ is given by
	\begin{equation*}
		1- \exp\left(\!-{\lambda}\!\int_{0}^{\hat{\varphi}(z)}\!\!\!\!\!\!\cos(\varphi)\left(1-e^{-\frac{\mu}{\pi}\int_{0}^{\hat{\omega}(z,\varphi)}\bar{F}_{\kappa}(\tilde{K} (\varphi,\eta))\diff \eta}\right)\diff \varphi\right),
	\end{equation*}
	where $\hat{\varphi}$, $\hat{\omega} (z,\varphi)$, and $\tilde{K}(\varphi,\eta)$ are given by Eqs. \eqref{27_1}, \eqref{28}, and \eqref{29}, respectively; finally, for $c_M\leq z,$ we have $\bP(Z\leq z) = \bP(\texttt{connection})$ derived in Theorem \ref{T:1}.
\end{theorem}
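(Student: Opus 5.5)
The plan is to write the event $\{Z> z\}$ as a void event for the thinned process $\Psi^p$ restricted to a spherical cap, and then evaluate that void probability with the conditional Poisson structure of the Cox model, exactly as for Theorem~\ref{T:1}.

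\textbf{Reduction to a void probability.} Let $B(u_0,z)$ be the closed ball of radius $z$ around $u_0$. Then
\[
\bP(Z\le z)=1-\bP\bigl(\Psi^p(B(u_0,z))=0\bigr).
\]
The three cases follow from the support of $\kappa$ and from geometry.

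For $z<c_m$, every satellite lies at distance at least $r_s-r_e=c_m$ from $u_0$, so $B(u_0,z)\cap\bS=\emptyset$. Hence $\bP(Z\le z)=0$.

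For $z\ge c_M$, every retained satellite has $\|X-u_0\|<\kappa\le c_M\le z$. So $\{\Psi^p(B(u_0,z))=0\}=\{\Psi^p=0\}$, and $\bP(Z\le z)$ equals the connection probability of Theorem~\ref{T:1}.

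\textbf{Conditioning on the orbits.} For $c_m\le z<c_M$, I condition on $\cO$. Given the orbits, the thinning $\epsilon_X$ is independent across points. On each orbit, $\Psi^p$ is therefore a Poisson process on $[0,2\pi]$ with intensity $(\mu/2\pi)\bar F_\kappa(\|X(\omega)-u_0\|)$. Its conditional void probability on the arc of orbit $O_i$ inside $B(u_0,z)$ is $\exp\bigl(-\frac{\mu}{2\pi}\int_{\text{arc}}\bar F_\kappa(\|X(\omega)-u_0\|)\diff\omega\bigr)$.

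Taking the product over orbits and averaging over the Poisson process $\Xi$ with the probability generating functional gives
\[
\bP(Z>z)=\exp\Bigl(-\int_{\mathbb T}\bigl(1-e^{-\frac{\mu}{2\pi}\int_{\text{arc}}\cdots}\bigr)\frac{\lambda\sin\phi}{2\pi}\diff\theta\diff\phi\Bigr).
\]

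\textbf{Change of variables.} I would reuse the parametrization from the proof of Theorem~\ref{T:1}. Each orbit is described by the angle $\varphi$ between the north pole and the orbital plane, i.e. the minimal polar distance of the orbit to $u_0$. The phase $\eta$ along the orbit is measured from the point of closest approach.

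By isotropy, the measure $\frac{\lambda\sin\phi}{2\pi}\diff\theta\diff\phi$ pushes forward to $\lambda\cos\varphi\,\diff\varphi$ on $[0,\pi/2]$, counting unoriented orbits. The distance from $u_0$ to the point at phase $\eta$ is then $\tilde K(\varphi,\eta)$ by the spherical law of cosines. The arc inside $B(u_0,z)$ is symmetric, $|\eta|\le\hat\omega(z,\varphi)$, which turns $\frac{\mu}{2\pi}\cdot 2\int_0^{\hat\omega}$ into $\frac{\mu}{\pi}\int_0^{\hat\omega}$.

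The limits come from solving $\tilde K(\varphi,\eta)=z$, with $\cos$ of the polar cap angle equal to $(r_e^2+r_s^2-z^2)/(2r_er_s)$:
\begin{align*}
\hat\varphi(z)&=\arccos\bigl((r_e^2+r_s^2-z^2)/(2r_er_s)\bigr),\\
\cos\hat\omega&=\cos\hat\varphi(z)\sec\varphi.
\end{align*}
These are the analogues of $\tilde\varphi$ and $\tilde\omega$ with $c_M$ replaced by $z$. Orbits with $\varphi>\hat\varphi(z)$ miss the cap and contribute nothing.

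\textbf{Main obstacle.} The hardest step is justifying the pushforward of the intensity to $\lambda\cos\varphi\,\diff\varphi$ and the symmetric-arc description. If this can be cited verbatim from Appendix~\ref{A:1}, the rest is substitution of $z$ for $c_M$ in the integration limits, while $\bar F_\kappa(\tilde K)$ stays in the integrand to keep the thinning.
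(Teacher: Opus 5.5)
Your proposal is correct and follows essentially the same route as the paper's proof. It uses the same case split on $z$ relative to $c_m$ and $c_M$, reduces to the void event of the thinned process $\Psi^p$, and conditions on $\cO$ so that each orbit carries a Poisson process of intensity $\frac{\mu}{2\pi}\bar{F}_\kappa$. It then applies the probability generating functional of $\Xi$ and changes variables via $\varphi=\pi/2-\phi$, $\eta=\pi/2-\omega$, with $c_M$ replaced by $z$ in the integration limits. Your use of the ball $B(u_0,z)\subset\bR^3$ is in fact slightly cleaner than the paper's cap $\bS_e(z)\subset\bS_e$, since the satellites lie on $\bS$ and not on the Earth's surface.
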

\begin{IEEEproof}
	See Appendix \ref{A:4}.
\end{IEEEproof}

Let $\tilde{Z} $ denote the distance from the typical user to its nearest satellite, conditionally on the fact that the typical user is in the coverage cell of any satellite. The integration of the distribution function of $Z$ over $[c_m,c_M]$ gives one and hence, we refer to the variable $\tilde{Z}$ as the normalized association distance.

\begin{corollary}\label{C:1}
	For $c_m<z<c_M,$ the CDF of the normalized association distance $\tilde{Z}$,  namely $\bP(\tilde{Z}<z)$ is given by
	\begin{equation}
		\frac{1- e^{\left.-\int_{0}^{\hat{\varphi}(z)}\!{\lambda}\cos(\varphi)\left(1-e^{-\frac{\mu}{\pi}\int_{0}^{\hat{\omega}(z,\varphi)}\bar{F}_{\kappa}(\tilde{K} (\varphi,\eta))\diff \eta}\right)\diff \varphi\right.}}{1- e^{\left.{-\int_{0}^{\tilde{\varphi}}{\lambda\cos(\varphi)}\left(1-e^{-\frac{\mu}{\pi}\int_{0}^{\tilde{\omega}(\varphi)}\bar{F}_{\kappa}\left(\tilde{K}\left(\varphi,\eta\right)\right)\diff \eta}\right)\diff \varphi}\right.}}.
	\end{equation}
where $\hat{\varphi}(z)$, $\hat{\omega} (z,\varphi)$, $\bar{F}_\kappa$, and $\tilde{K}(\varphi,\eta)$ are given by Eqs.  \eqref{27_1}, \eqref{28}, \eqref{eq:F_bar}, and \eqref{eq:tildeK}, respectively.
	\begin{IEEEproof}
		Based on the definition of $\tilde{Z},$ we have $ $
		\begin{equation}
			\bP(\tilde{Z}<z)=\bP(Z<z|Z\neq \infty)=\frac{\bP(Z<z,Z\neq \infty)}{\bP(Z\neq \infty)}\nnb,
		\end{equation}
		where the numerator is given by Theorem \ref{T:4}.
	\end{IEEEproof}
\end{corollary}
Fig. \ref{fig:theorem3anasim} shows the CDF of the normalized association distance, namely $\bP(\tilde{Z}<z)$ obtained by the derived formula in Theorem \ref{T:4} and by system-level simulations. Comparing the numerical analysis with the simulation results asserts the accuracy of the formula we derived. In this figure, we use $\lambda=\{30,60,90\} $ to illustrate the behavior of the association distance in various scenarios: a single network operator, two network operators, and three network operators where each operator is equipped with $30$ orbital planes on average. As $\lambda$ grows or equivalently, as more and more operators are engaged in satellite sharing, the association distance decreases, illustrating the benefits in terms of desired link quality.
\begin{figure}
	\centering
	\includegraphics[width=1\linewidth]{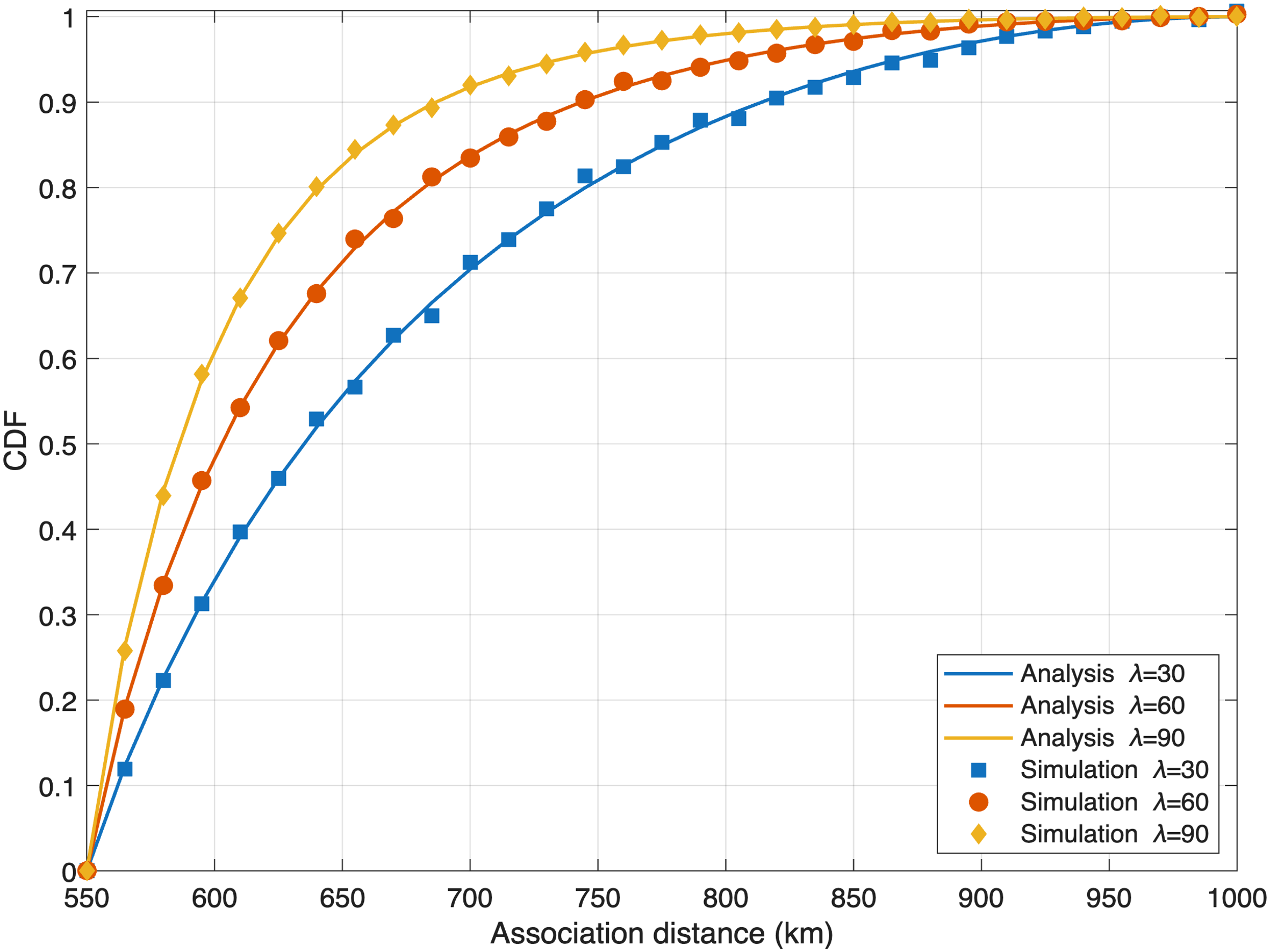}
	\caption{The normalized association distance distribution of the typical user when $r_s=6921$ km, $\mu=30$, and $c_M=1000$ km.}
	\label{fig:theorem3anasim}
\end{figure}

\subsection{Received Signal Power and Total Received Power}\label{s:3-4}
For the moment, let assume the transmit power of satellite $X_{i,j}$ be $p_t,$ namely $p_{i,j}=p_t$ a constant. The total received power is defined as the total power received by the typical user from \emph{all} satellites within coverage. The total received power, denoted by $T$, is given by 
\begin{align}
	T &= \sum_{X_{i,j}\in\Psi^p} \frac{p_{t}G_{i,j}(1/16\pi^2\lambda_{\text{carrier}}^2)H_{i,j}}{\|X_{i,j}-u_0\|^{\alpha}}\nnb\\
	&=\sum_{X_{i,j}\in\Psi^p} \frac{\bar{p}H_{i,j}}{\|X_{i,j}-u_0\|^{\alpha}},
\end{align}
where $\Psi^{p}$ is the thinned satellite point process, $G_{i,j}=G_{\text{tx};i,j}G_{\text{rx}}$ is the aggregate antenna gain between the satellite  $X_{i,j}$ and the typical user, $\lambda_{\text{carrier}}$ is the wavelength of the carrier frequency, $\alpha$ is the path loss exponent, and $H_{i,j}$ is an i.i.d. random variable associated with fading between $X_{i,j}$ and the typical user. We assume that $\bE[H]=1$ where the Laplace transform of $H$ is denoted by $\cL_{H}(s).$ To maintain the tractability of the analysis, we assume $\bar{p}=p_tG_{i,j}(1/16\pi^2\lambda_{\text{carrier}}^2) $ for all $i$ and $j.$ Total received power $T$ outlines the amount of effective interference level expected by a random gateway on any location when network operators share their satellite infrastructures and they all use the same spectrum.

On the other hand, if the typical user is not within any of the coverage cells, there is no received signal and the received signal power is zero. Here, we use the normalized association distance $\tilde{Z}$, introduced in Corollary \ref{C:1}, to evaluate the average received signal power.
\begin{theorem}\label{T:5}
	First, the average total received power is
	\begin{align}
	\bE[T]=	\int_{0}^{\tilde{\varphi}}\frac{\lambda\mu \bar{p} \cos(\varphi)}{\pi}\int_{0}^{\tilde{\omega}(\varphi)}\frac{\bar{F}_{\kappa}(\tilde{K}(\varphi,\eta))}{\tilde{K}^{\alpha}(\varphi,\eta)}\diff \eta \diff \varphi.
	\end{align}
	On the other hand, the Laplace transform of the total received power, denoted by $\cL_{T}(s)$, is given by
	\begin{equation}
		\cL_{T}(s)=e^{-{\lambda}\int_{0}^{\tilde{\varphi}}\cos(\varphi)\left(1-e^{\left.-\frac{\mu}{\pi}\int_{0}^{\tilde{\omega}(\varphi)}1-\cL_{H}(\frac{s\bar{p}}{\tilde{K}^{\alpha}(\varphi,\eta)})\diff \eta\right.} \right)\diff \varphi},
	\end{equation}
	where $\tilde{\varphi}$, $\tilde{\omega}(\varphi),$ $\bar{F}_{\kappa}, $ and $\tilde{K}(\varphi,\eta)$ are given by Eq. \eqref{eq:tilde varphi},\eqref{eq:tilde omega}, \eqref{eq:F_bar}, and \eqref{eq:tildeK}, respectively.

	The average received signal power is given by
	\begin{equation}
		\bar{p} \left.\int_{0}^{\infty}u^{-\alpha}\frac{\partial \bP(\tilde{Z}\leq u)}{\partial u}\diff u \right.,
	\end{equation}
	where $\bP(\tilde{Z}<z)$ is given by Corollary \ref{C:1}.
\end{theorem}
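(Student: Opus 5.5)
The plan is to condition on the orbit process $\cO$ and work with the thinned Cox process $\Psi^p$ of Definition~\ref{D:1}. For the mean and the Laplace transform I reuse the parametrization behind Theorems~\ref{T:1}--\ref{T:3}. By rotation invariance, an orbit is described by its angular offset $\varphi$ from the typical user $u_0$. An orbit with offset $\varphi\in[0,\tilde\varphi]$ can contribute covering satellites; the induced intensity of such orbits is $\lambda\cos(\varphi)\diff\varphi$. A satellite on that orbit at phase offset $\eta$, measured from the point of the orbit closest to $u_0$, lies at distance $\tilde K(\varphi,\eta)$. Hence, conditionally on the orbit, the retained satellites form a Poisson process in $\eta$ with intensity $\frac{\mu}{2\pi}\bar F_\kappa(\tilde K(\varphi,\eta))$ on $[-\tilde\omega(\varphi),\tilde\omega(\varphi)]$. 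By symmetry this becomes $\frac{\mu}{\pi}$ on $[0,\tilde\omega(\varphi)]$. I take these facts as already established in Appendices~\ref{A:1}--\ref{A:3}.

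\emph{Mean.} I apply Campbell's theorem twice: first to the conditional Poisson process on each orbit, then to the Poisson orbit process. Using $\bE[H]=1$ and independence of $H$ from the geometry, this gives
\begin{equation*}
\bE[T]=\int_0^{\tilde\varphi}\lambda\cos(\varphi)\int_0^{\tilde\omega(\varphi)}\frac{\mu}{\pi}\bar F_\kappa(\tilde K(\varphi,\eta))\,\bar p\,\tilde K^{-\alpha}(\varphi,\eta)\diff\eta\diff\varphi,
\end{equation*}
which is exactly the mean connection number of Theorem~\ref{T:3} with the weight $\bar p\tilde K^{-\alpha}$ inserted.

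\emph{Laplace transform.} Conditionally on $\cO$, the contributions of distinct orbits are independent. For one orbit, the probability generating functional of the Poisson process with independent marks $(\kappa,H)$ gives
\begin{equation*}
\exp\Big(-\tfrac{\mu}{\pi}\int_0^{\tilde\omega(\varphi)}\bE\big[1-\ind_{\kappa>\tilde K}e^{-s\bar p H\tilde K^{-\alpha}}\big]\diff\eta\Big).
\end{equation*}
The inner expectation equals $\bar F_\kappa(\tilde K)\big(1-\cL_H(s\bar p\tilde K^{-\alpha})\big)$. Averaging over the Poisson orbit process with its PGFL then gives $\exp\big(-\lambda\int_0^{\tilde\varphi}\cos\varphi\,(1-e^{-\cdots})\diff\varphi\big)$, the same structure as in Theorem~\ref{T:2}.

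A bookkeeping point to check here: my derivation produces the factor $\bar F_\kappa(\tilde K)$ multiplying $1-\cL_H$ inside the $\eta$-integral. The displayed formula omits it. I would either restore that factor or note that the display implicitly restricts to retained satellites. I will keep the factor and state it explicitly.

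\emph{Average received signal power.} The serving satellite is the nearest retained one, at distance $\tilde Z$ conditioned on connection. Its power is $\bar p H\tilde Z^{-\alpha}$. Because $H$ is independent of the geometry and $\bE[H]=1$, the mean is $\bar p\,\bE[\tilde Z^{-\alpha}]$. Writing this as a Stieltjes integral against the CDF from Corollary~\ref{C:1}, whose support is $[c_m,c_M]$ so that the integrand is bounded, gives the stated formula.

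The main obstacle is the geometric reduction: confirming that the orbit intensity $\lambda\cos\varphi$ and the per-orbit range $[0,\tilde\omega(\varphi)]$ remain valid once the marks are weighted by distance. Since these are inherited from Theorems~\ref{T:1}--\ref{T:3}, the remaining steps are routine Campbell and PGFL manipulations.
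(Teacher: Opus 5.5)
Your approach is the paper's. You apply Campbell's formula first to the thinned per-orbit Poisson process and then to the orbit process for $\bE[T]$. You use the per-orbit PGFL followed by the PGFL of $\Xi$ for $\cL_T$, and the independence of $H$ and $\tilde Z$ with $\bE[H]=1$ for the serving power. Your mean and serving-power formulas coincide with the paper's.

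Your bookkeeping concern points to a genuine error in the statement. Commit to the correction and drop the alternative reading. The paper's proof makes exactly this slip. It writes $\cL_T$ as a product of $\cL_H(s\bar p K^{-\alpha}(\phi_i,\omega_j))$ over the thinned process $\psi_i^p$, but then applies the PGFL with intensity $\mu/2\pi$. It should use the thinned intensity $\frac{\mu}{2\pi}\bar F_\kappa(K(\phi_i,\omega))$, which it had just used for the mean. Three checks confirm your version.

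\emph{Derivative at zero.} $-\cL_T'(0)$ computed from the display equals $\lambda\int_0^{\tilde\varphi}\cos(\varphi)\frac{\mu}{\pi}\int_0^{\tilde\omega(\varphi)}\bar p\,\tilde K^{-\alpha}(\varphi,\eta)\diff\eta\diff\varphi$. This contradicts the first part of the theorem, whereas your version returns that part exactly.

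\emph{Limit $s\to\infty$.} With $H>0$ a.s., as for the exponential fading used later, the limit must be $\bP(T=0)=1-p_c$, with $p_c$ from Theorem~\ref{T:1}. Your version gives this. The display instead gives the probability that no satellite at all lies within distance $c_M$ of $u_0$, which is strictly smaller.

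\emph{Consistency with Theorem~\ref{T:2}.} That theorem is the same computation with $1-e^{-t}$ in place of $1-\cL_H(s\bar p\tilde K^{-\alpha})$, and there the paper does keep the factor $\bar F_\kappa$.

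So the display is not an implicit restriction to retained satellites. It is the Laplace transform of the power summed over all satellites within distance $c_M$, not of $T$. The correct statement is
\begin{equation*}
\cL_{T}(s)=\exp\left(-\lambda\int_{0}^{\tilde{\varphi}}\cos(\varphi)\left(1-e^{-A_s(\varphi)}\right)\diff\varphi\right),
\end{equation*}
with
\begin{equation*}
A_s(\varphi)=\frac{\mu}{\pi}\int_{0}^{\tilde{\omega}(\varphi)}\bar{F}_{\kappa}(\tilde{K}(\varphi,\eta))\left(1-\cL_{H}\left(\frac{s\bar{p}}{\tilde{K}^{\alpha}(\varphi,\eta)}\right)\right)\diff\eta.
\end{equation*}

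There is one slip in your own mark computation. The per-satellite factor is $\exp(-s\bar pH\tilde K^{-\alpha}\ind_{\kappa>\tilde K})$, so the PGFL integrand is $\bE[\ind_{\kappa>\tilde K}(1-e^{-s\bar pH\tilde K^{-\alpha}})]$. As you wrote it, $\bE[1-\ind_{\kappa>\tilde K}e^{-s\bar pH\tilde K^{-\alpha}}]$ equals $1-\bar F_\kappa\cL_H$. That is not the value $\bar F_\kappa(1-\cL_H)$ you then correctly state.
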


\begin{IEEEproof}
	See Appendix \ref{A:5}
\end{IEEEproof}
\begin{figure}
	\centering
	\includegraphics[width=1\linewidth]{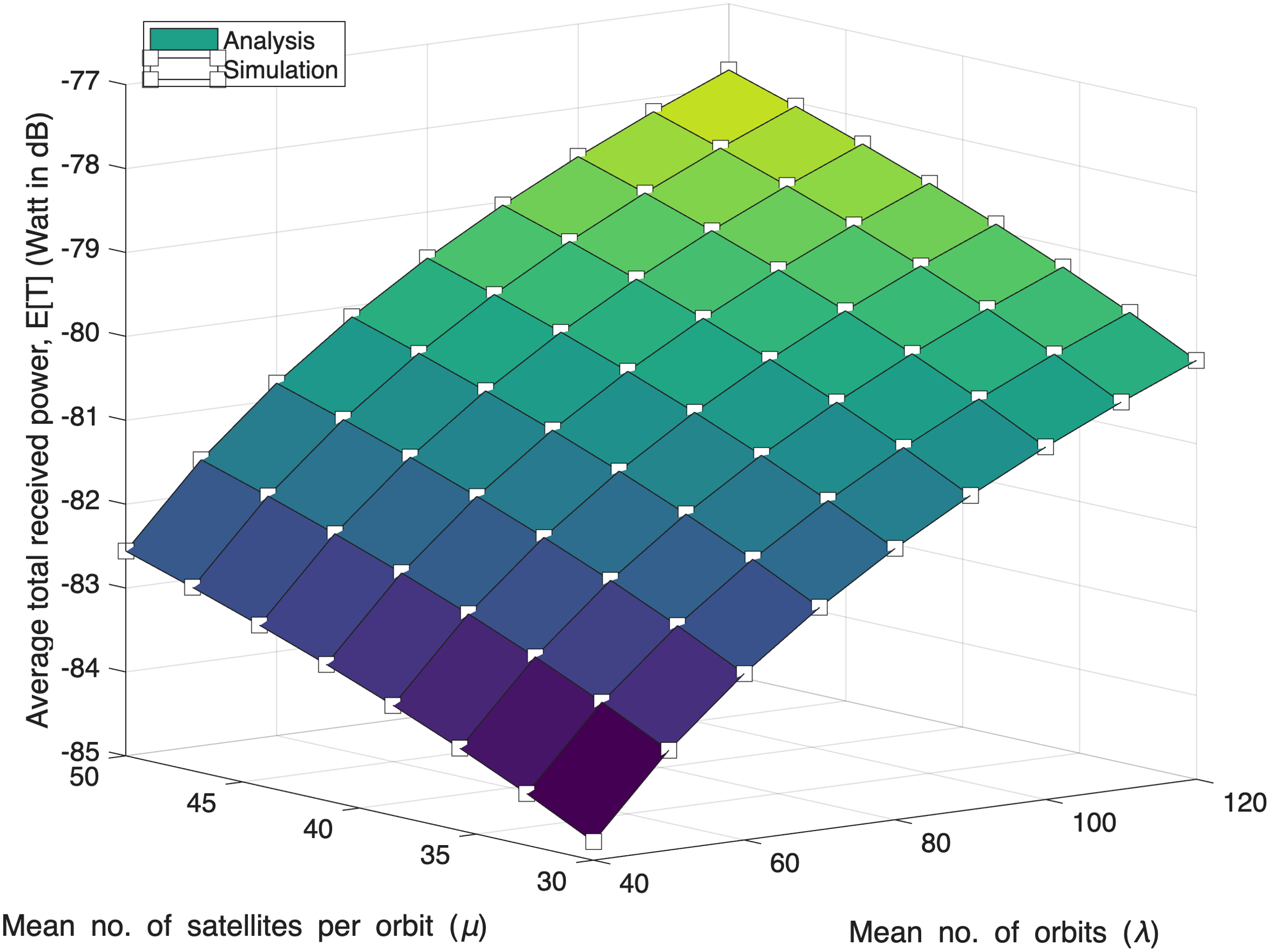}
	\caption{The average total received signal power $\bE[T]$.}
	\label{fig:theorem4}
\end{figure}
Fig. \ref{fig:theorem4} compares the average total received power at the typical user obtained by system-level simulation and by Theorem \ref{T:5}. It shows that the simulation results confirm the analytically derived formula. For the analysis, we adopted the parameters from \cite{38821}: $c_M=916$ km, $p_t=100$ mW, $G_{i,j}=2,$ $\lambda_{\text{carrier}}=0.06$ m, and $r_s=6921$ km.

\subsection{Numerical Result: Communication Performance}\label{S:3E}
We assume that the typical user at $u_0$ is associated with the nearest satellite among the ones whose cells contain the typical user. Furthermore, let us assume for the moment $\bar{p}=p_tG_{i,j}(1/16\pi^2\lambda_{\text{carrier}}^2) $ to emphasize the impact of geometric orbital structure to the SIR. Then, using the thinned satellite point process $\Psi^p,$ in Eq. \eqref{eq:13}, the SIR of the typical user is given by
\begin{equation}
	\SIR= \frac{\bar{p}H\|X_\star-u_0\|^{-\alpha}}{\sum_{X_{i,j}\in\Psi^p\setminus {X_\star}}\bar{p}H_{i,j}\|X_{i,j}-u_0\|^{-\alpha}}
\end{equation}
where the association satellite is given by
\begin{equation}
	X_\star = \argmin_{X_{i,j}\in\Psi^p}\|X_{i,j}-u_0\|.
\end{equation}
Then, using threshold $\tau,$ the SIR coverage is given as follows:
\begin{align}
&\bP(\SIR>\tau)\nnb\\
&\hspace{4mm}=\bP\left(\frac{\bar{p}H\|X_\star-u_0\|^{-\alpha}}{\sum_{X_{i,j}\in\Psi^p\setminus {X_\star}}\bar{p}H_{i,j}\|X_{i,j}-u_0\|^{-\alpha}}>\tau\right).\nnb
\end{align}

The SIR-based formulation above presumes an interference-limited operation where multiple operators reuse the same spectrum, and the interferers are precisely the co-covering satellites, namely the satellites of $\Psi^p$ other than the serving satellite. Whenever the connection number is two or greater, these interferers are located at distances comparable to the association distance---both are bounded by the same maximum communication range $c_M$---so the aggregate interference received over the shared spectrum is of the same order as the desired signal power and dominates the noise floor of typical LEO downlink budgets. This is exactly the regime examined in Figs. \ref{fig:sirlambdavarying} and \ref{fig:sircMvarying}: with $\mu=20$ and $c_M\in[650,900]$ km, the typical user---conditionally on being in geometric coverage---is covered by two or more satellites with substantial probability. For instance, for $\lambda=60$ and $c_M=900$ km, system-level simulations indicate that the conditional mean connection number is approximately $2.1$ and that the conditional probability of two or more covering satellites is approximately $0.6$; both quantities grow with $\lambda$ and $c_M$ (cf. Theorem \ref{T:3} and Fig. \ref{fig:theorem2simtran}). In the scenario in which exactly one satellite covers the user, there is no co-channel interferer, and such realizations are naturally counted as covered in the unconditional CCDF defined below. The SIR results reported below therefore apply to, and should be interpreted within, this interference-limited multi-operator sharing regime. The exception is sparse configurations in which the typical user is covered by at most one satellite---e.g., the single-operator deployments with connection probability below $0.1$ in Fig. \ref{fig:theorem1example}---where the link is instead noise-limited and the SIR is not the appropriate link-level indicator; there, the geometric metrics (connection probability, connection number, and association distance) and the received-power characterization in Theorem \ref{T:5} directly apply, and a noise term can be incorporated by evaluating $\bar{p}H\|X_\star-u_0\|^{-\alpha}/(\sigma^2+I)$ with the same simulation framework, where $\sigma^2$ is the noise power and the aggregate interference $I$ is equal to zero.

To complement the geometric coverage analysis conducted in above, we evaluate the SIR distribution observed at the typical user under the proposed Cox-based orbit-structured deployment. Under exponential fading, namely $H$ follows an exponential random variable with mean $1$, the SIR is computed for each network realization, and the empirical CCDF, $\bP(\mathrm{SIR}>\tau)$, is obtained over multiple realizations. This framework allows us to directly relate the coverage overlap induced by orbital geometry to communication-level performance. In Figs. \ref{fig:sirlambdavarying} and \ref{fig:sircMvarying}, we use $r_s=6921$ km, $\mu=20$, and $\alpha=2.$ It is important to note that the SIR distributions presented in Figs. \ref{fig:sirlambdavarying} and \ref{fig:sircMvarying} are unconditional, meaning that they are computed over all network realizations, including those in which the typical user is not covered by any satellite. In such cases, the SIR is set to zero. As a result, the reported CCDF, $\bP(\mathrm{SIR}>\tau)$, jointly reflects both the probability of being in geometric coverage and the quality of the link when geometric coverage is available. We emphasize that this unconditional formulation is particularly relevant in infrastructure-sharing scenarios, where both coverage availability and interference jointly determine the effective performance experienced by the user.

Fig. \ref{fig:sirlambdavarying} illustrates the impact of the mean number of orbital planes $\lambda$ on the SIR distribution. As $\lambda$ increases, the SIR distribution improves in the moderate SIR regime, as reflected by the upward shift of the CCDF surface. This behavior is due to the increased spatial diversity of orbital planes, which raises the likelihood that the typical user is covered by a closer satellite. However, this improvement gradually saturates as $\lambda$ becomes large. The reason is that increasing $\lambda$ also increases the number of satellites whose coverage regions overlap at the user location, thereby introducing additional interference. As a result, while increasing the number of orbital planes improves coverage reliability, its impact on high-SIR performance is limited.

Fig. \ref{fig:sircMvarying} shows the impact of the maximum communication distance $c_M$ on the SIR distribution. In contrast to the behavior observed with $\lambda$, increasing $c_M$ leads to a degradation of the SIR distribution across all SIR thresholds. While a larger $c_M$ expands the coverage region and increases the probability that the user is served, it simultaneously enlarges the set of satellites whose coverage cells include the user. This results in a significant increase in interference, which outweighs the benefit of improved coverage. Consequently, the SIR decreases as $c_M$ increases.

\begin{figure}
	\centering
	\includegraphics[width=1\linewidth]{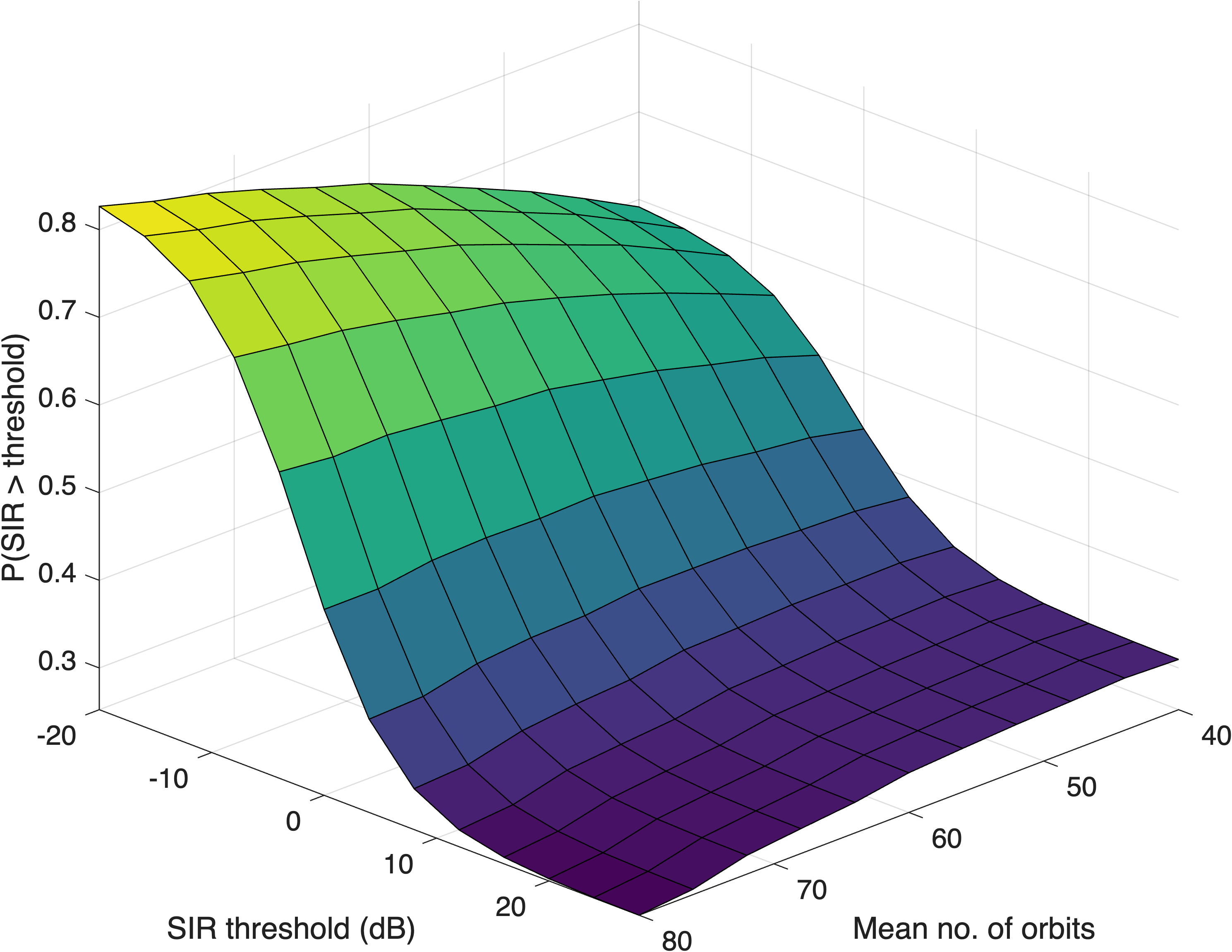}
	\caption{The SIR coverage probability by varying $\lambda$ the mean number of orbits. $c_M$ is set to be $900$km.}
	\label{fig:sirlambdavarying}
\end{figure}
\begin{figure}
	\centering
	\includegraphics[width=1\linewidth]{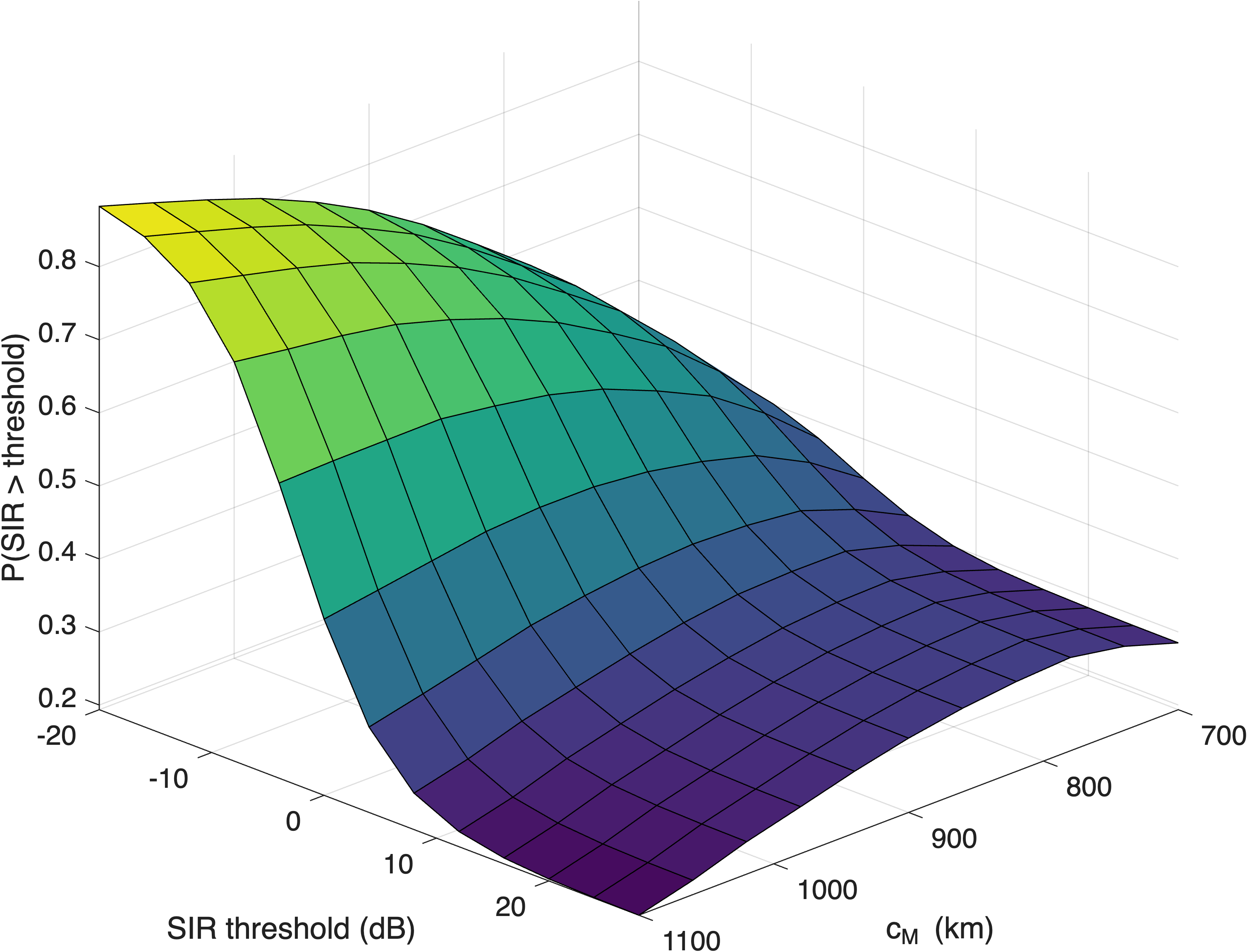}
	\caption{The SIR coverage probability by varying $c_M$ the maximum communication distance. $\lambda$ is set to be $60$.}
	\label{fig:sircMvarying}
\end{figure}
This behavior is counter-intuitive from a coverage-centric perspective, where increasing the communication range is typically associated with improved performance. However, under infrastructure sharing, a larger $c_M$ significantly increases the degree of coverage overlap, leading to a substantial rise in interference. Since the desired signal is provided by a single serving satellite while interference accumulates from multiple satellites, the overall SIR is reduced. This highlights that improving geometric coverage does not necessarily translate into improved communication performance in orbit-structured networks.

Taken together, these results reveal a fundamental tradeoff between coverage and interference in orbit-structured satellite networks. Increasing either the number of orbital planes or the communication range improves geometric connectivity, but also increases the level of interference due to coverage overlap. As a result, key performance metrics such as the SIR are governed by the interaction between orbital geometry and coverage structure, rather than by node density or communication range alone. This observation underscores the importance of orbit-aware models, as spatially independent models such as binomial or Poisson models would not capture the coupling between coverage expansion and interference growth observed here.

\section{Further Discussion}\label{section:discussion}
\subsection{Multi-Altitude Satellite Networks}
In this section, instead of considering all orbits at the same altitude, let us assume for the moment that there are $M$ different values for satellite altitudes. Specifically, we model the locations of satellites as a summation of independent Cox point processes $\sum_{k}\Psi_k$ where the Cox point process $\Psi_k$ is equipped with $\lambda_k$ mean number of orbits, $\mu_{k}$ mean number of satellites per orbit, and $r_{s,k}$ the radius of orbit for $k=1,...,M$. Since $M$ independent Cox point processes are combined, the final satellite point process $\tilde{\Psi}$ is given by $\tilde{\Psi}= \sum_{k=1}^{M}\Psi_k = \sum_{k=1}^{M}\sum_{i,j\in\Psi_k}\delta_{X_{i,j}},$
where $\Psi_k$ is the satellite Cox point process with parameters $\lambda_k,\mu_k$ at radius $r_{s,k}$, for $k=1,\ldots,M$.

To maintain the tractability of the analysis, the communication range of each satellite must be greater than its altitude, exactly as in the single-altitude model of Section \ref{section:range}. Hence, for a given $k$, the communication range of the satellite $X_{i,j}\in\Psi_k$, denoted by $\kappa_{i,j,k}$, is assumed to be an i.i.d. uniform random variable between $c_{m,k}$ and $c_{M,k}$, where the minimum value is set to $c_{m,k}=r_{s,k}-r_e$ so that every realization of $\kappa_{i,j,k}$ exceeds the altitude of the satellite. We let $F_{\kappa,k}$ denote the CDF of the communication range of the satellites of $\Psi_k$, and $\bar{F}_{\kappa,k}=1-F_{\kappa,k}$ its CCDF.

It is important to note that satellites at multiple altitudes are now fully modeled as the superposition of independent Cox point processes. Hence, we use the machinery in Section \ref{section_performance_analysis} to examine the performance of the proposed satellite infrastructure sharing under the practical assumption that satellites are now at various altitudes.

Below, we present the derivation of the connection probability of the typical user. It is worth noting that the single-altitude analysis extends to the multi-altitude case in a straightforward manner within the proposed analytical framework.
\begin{proposition}\label{P:1}
	With the superposition model, the connection probability of the typical user is given by
	\begin{align}
		\bP(\texttt{connection})&=1- \prod_{k=1}^M\bP(\texttt{outage for $\Psi_k$}),\nnb
	\end{align}
where the outage probability of the Cox point process $\Psi_k$ is
\begin{align}
	&\exp\left(-\int_{0}^{\tilde{\varphi}_{k}}{\lambda_{k}\cos(\varphi)}\right.\nnb\\
	&\hspace{10mm}\left.\left(1-e^{-\frac{\mu_{k}}{\pi}\int_{0}^{\tilde{\omega}_{k}(\varphi)}\bar{F}_{\kappa,k}(\tilde{K}_{k}(\varphi,\eta))\diff \eta}\right)\diff \varphi\right),\nnb
\end{align}
where for $k=1,...,M$, we have
	\begin{align}
	&\tilde{\varphi}_{k} = \arccos((r_e^2+r_{s,k}^2-c_{M,k}^2)/(2 r_e r_{s,k})), \label{eq:tilde varphi_k}\\
	&\tilde{\omega}_{k}(\varphi) = \arcsin(\sqrt{1-\cos^2(\tilde{\varphi}_{k})\sec^2(\varphi)}),\label{eq:tilde omega_k}\\
	&\tilde{K}_{k}(\varphi,\eta) = \sqrt{r_{s,k}^2 - 2r_{s,k} r_e \cos(\varphi)\cos(\eta)+r_e^2}.\label{eq:tildeK_k}
\end{align}
\end{proposition}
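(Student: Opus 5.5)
The plan is to reduce the multi-altitude statement to Theorem~\ref{T:1} applied separately to each layer, and then combine the layers using independence. First I would write the outage event of the superposition as an intersection of per-layer outage events. The typical user $u_0$ is \emph{not} covered by $\tilde{\Psi}$ exactly when no satellite of any $\Psi_k$ has $u_0$ in its cell. Equivalently, for every $k$ the thinned process $\Psi_k^p$ (as in Definition~\ref{D:1}, with marks $\kappa_{i,j,k}$) is empty. Hence
\begin{equation*}
1-\bP(\texttt{connection})=\bP\Big(\bigcap_{k=1}^M\{\Psi_k^p=\emptyset\}\Big).
\end{equation*}
The layers are built from independent orbit processes, independent satellite Poisson processes on the orbits, and independent marks. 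The events $\{\Psi_k^p=\emptyset\}$ are therefore independent across $k$, and the probability factorizes into $\prod_k\bP(\texttt{outage for }\Psi_k)$.

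Second, I would identify each factor. Each $\Psi_k$ is a spherical Cox process of exactly the type in Section~\ref{S:2}, with parameters $(\lambda_k,\mu_k,r_{s,k})$. Its marks are i.i.d.\ uniform on $[c_{m,k},c_{M,k}]$, where $c_{m,k}=r_{s,k}-r_e$ and we require $c_{M,k}<\sqrt{r_{s,k}^2-r_e^2}$. These are precisely the hypotheses of Theorem~\ref{T:1} after substituting $r_s\to r_{s,k}$, $c_M\to c_{M,k}$, $\lambda\to\lambda_k$, $\mu\to\mu_k$ and $\bar F_\kappa\to\bar F_{\kappa,k}$. The outage probability of $\Psi_k$ is one minus the connection probability in Theorem~\ref{T:1}, which is the stated exponential. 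The angles $\tilde\varphi_k$ and $\tilde\omega_k$ and the distance $\tilde K_k$ are obtained by the same substitution in Eqs.~\eqref{eq:tilde varphi}, \eqref{eq:tilde omega} and \eqref{eq:tildeK}. Multiplying the factors and taking the complement gives the claim.

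The only point needing care is the independence and factorization step. Specifically, the per-layer voids must depend only on the $k$-th layer's randomness. This holds because the retention indicator $\epsilon_{X}$ of a satellite uses only its own mark and its own distance to the fixed point $u_0$. Since $u_0$ is deterministic, no cross-layer coupling enters. One should also note that the typical-user reduction is still valid: each $\Psi_k$ is rotation invariant, so their independent superposition is as well, and evaluating at the north pole remains legitimate. I do not expect genuine obstacles beyond this; the proposition is essentially a corollary of Theorem~\ref{T:1}.
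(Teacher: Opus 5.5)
Your proposal is correct and follows the same route as the paper's one-line proof. The outage event of the superposition factorizes over the independent layers $\Psi_k$ (each with its own independent marks), and each factor is the outage probability from Theorem~\ref{T:1} with $(\lambda,\mu,r_s,c_M,\bar F_\kappa)$ replaced by $(\lambda_k,\mu_k,r_{s,k},c_{M,k},\bar F_{\kappa,k})$; you simply spell out the independence and factorization step that the paper leaves implicit.
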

\begin{IEEEproof}
	Since $\tilde{\Psi}$ is the superposition of independent Cox point processes, we apply the machinery employed in Theorem \ref{T:1} to obtain the final result.
\end{IEEEproof}
\begin{figure}
	\centering
	\includegraphics[width=.97\linewidth]{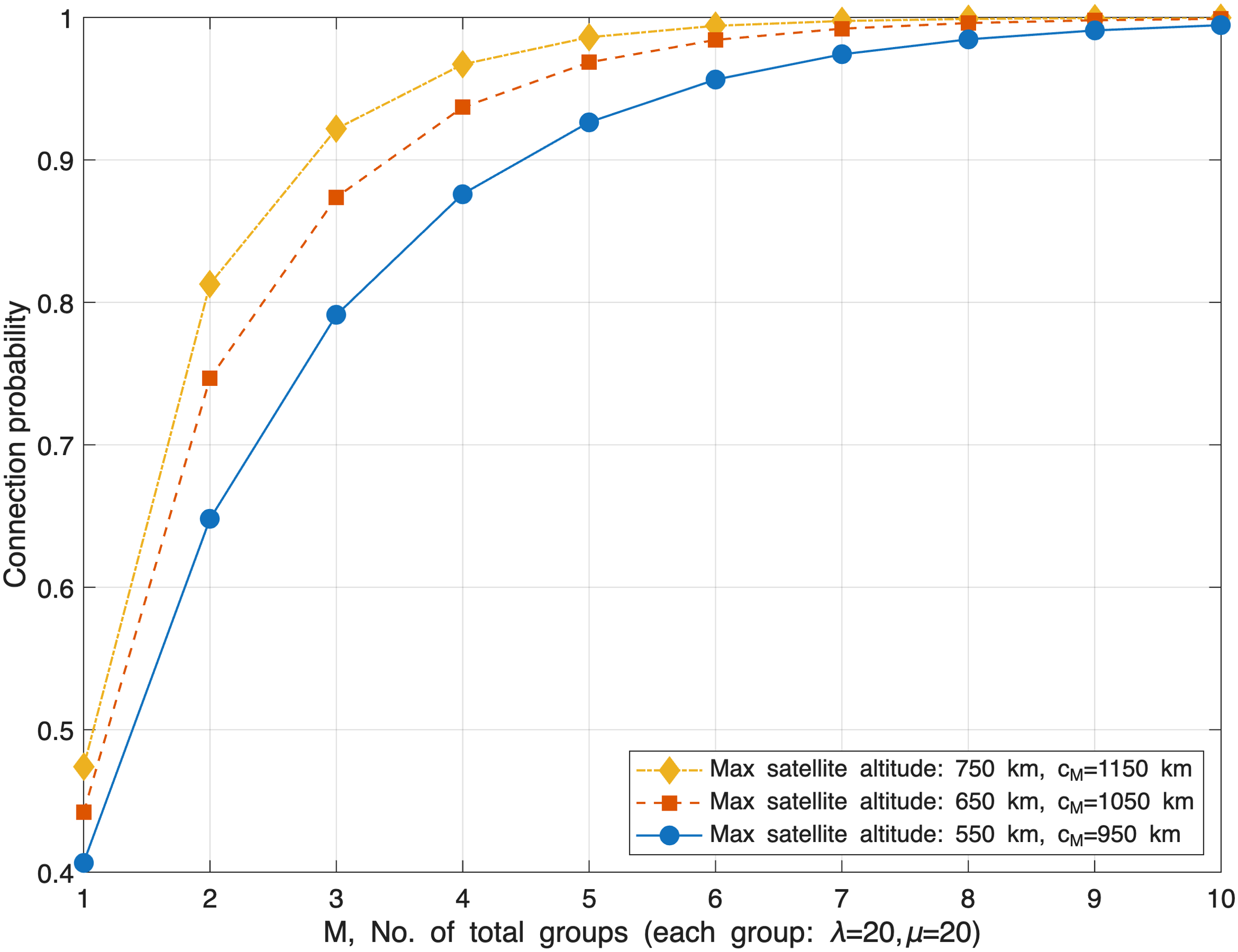}
	\caption{The connection probability of typical user with multi-altitude satellite network operator groups. The $x$-axis is $M$.}
	\label{fig:discussiona}
\end{figure}

Fig. \ref{fig:discussiona} shows the connection probability numerically evaluated through Proposition \ref{P:1}. The $x$-axis is the number of satellite groups, or the variable $M$ in Proposition \ref{P:1}. Hence, the number of groups sharing the satellite infrastructure is indicated by a given $x$ value. Each group has $\lambda_k=20$ and $\mu_k=20$ for all $k=1,…,M$. The curve with blue circles is a benchmark graph when all satellite groups have the same satellite altitude. On the other hand, the curve with red squares describes the connection probability when the satellite altitudes of different groups are uniformly distributed between $550$ km and $650$ km; similarly, the curve with yellow diamonds indicates the connection probability when the satellite altitudes are uniformly distributed between $550$ km and $750$ km. From the figure, we learn that the proposed satellite infrastructure sharing operates very well as intended, even for various satellite altitudes.  Along with the previous results with satellites being at the same altitude, this substantiates the applicability and feasibility of the proposed satellite sharing architecture in practical real-world situations. For the rest of the performance metrics, we can apply the techniques used in the proofs of Theorems \ref{T:2}–\ref{T:5} to assess the feasibility of satellite infrastructure sharing under a multi-altitude satellite constellation. A detailed analysis of these performance metrics is left for future work. We next validate the geometric accuracy of the proposed model by comparing it with a realistic constellation.
\begin{figure}
	\centering
	\includegraphics[width=1\linewidth]{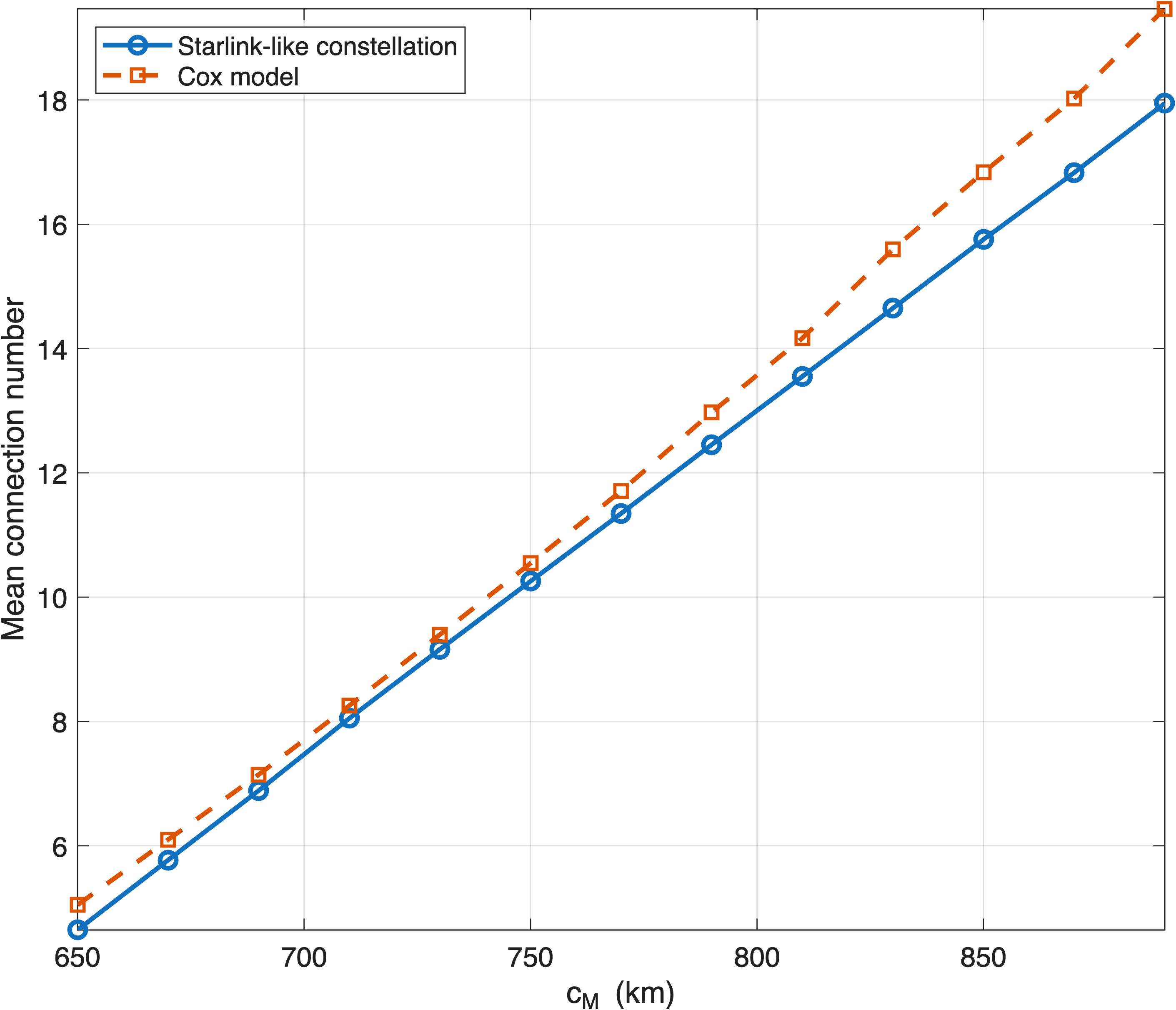}
	\caption{Mean connection number of the typical user as a function of $c_M$, comparing the Cox model and a Starlink-like constellation at latitude $30^\circ$. The Cox parameters are calibrated based on visibility statistics.}
	\label{fig:coxslcomparisoncnat30}
\end{figure}
\subsection{Connection Number: Cox vs. Starlink}\label{S:5E}
To assess the practical relevance of the proposed Cox-based framework, we compare the connection number predicted by the Cox model with that obtained from a Starlink-like multi-shell constellation. The connection number, defined as the number of satellites whose geometric coverage cells include the typical user, is adopted as the primary comparison metric, motivated by its direct geometric interpretation. It captures the degree of coverage overlap induced by orbit-constrained deployments, and hence reflects both connectivity redundancy and the potential for handover and interference under infrastructure sharing. Unlike conventional metrics such as nearest distance or SIR, the connection number isolates the underlying geometric structure without being confounded by fading or link adaptation, making it particularly suitable for evaluating the validity of spatial models.

The down-sampled realistic constellation is constructed explicitly as follows. We consider a Starlink-like configuration consisting of three shells at a common altitude of $550$ km, i.e., $r_s=6921$ km, with inclinations of $43^\circ$, $53^\circ$, and $33^\circ$, respectively, motivated by the inclination structure of the Starlink Gen2 deployment authorized in \cite{FCCStarlink}. Each shell follows the Walker constellation structure \cite{walker1984satellite,11159552}: it comprises $25$ orbital planes whose ascending nodes are evenly spaced, and each orbital plane carries $100$ evenly phased satellites. The per-plane population is obtained by down-sampling a fully populated reference plane of $125$ satellites; the term ``down-sampled'' refers to this reduction, which keeps the total constellation size ($7500$ satellites) at a scale comparable to the deployments analyzed in this paper while preserving the Walker-type orbital structure. In each Monte Carlo realization, an independent random rotation of the ascending nodes is applied to each shell and an independent random phase offset is applied to each orbital plane, emulating the motion of the constellation relative to the typical user, located at latitude $30^\circ$. A satellite is declared visible when its distance to the typical user does not exceed $\sqrt{r_s^2-r_e^2}$, i.e., when it is above the local horizon. As in the rest of the paper, each satellite is assigned an i.i.d. communication range $\kappa$, uniformly distributed on $[c_m,c_M]$ with $c_m=550$ km.

The comparison is carried out by calibrating the Cox model parameters using visibility statistics obtained from the Starlink-like constellation over $10^4$ independent realizations. Specifically, the mean number of orbital planes $\lambda$ in the Cox model is set to match the average number of visible orbits observed from the typical user, while the mean number of satellites $\mu$ is set to match the average number of visible satellites per realization. The mean connection number of the two models is then estimated over $3\times10^{4}$ independent realizations for each value of $c_M\in[650,900]$ km. This calibration ensures that both models are aligned in terms of first-order geometric exposure, while preserving their intrinsic structural differences. We note that the Cox model is isotropic by construction, whereas the Starlink constellation exhibits structured and non-isotropic orbital configurations. As such, the proposed calibration serves as a first-order approximation that enables a meaningful comparison between the two models, despite their fundamentally different geometric characteristics. While alternative calibration strategies could be considered, the adopted approach provides a simple and transparent baseline for evaluating how well the Cox model captures the geometric behavior of realistic deployments.

Finally, we emphasize that the comparison focuses on the Cox and Starlink models, rather than including spatially independent models such as binomial or Poisson point processes. This is because such models do not incorporate the orbital foundation that governs satellite placement and clustering, and therefore cannot capture the plane-induced spatial correlations that are central to the problem considered here. The objective of this section is thus not to benchmark all possible spatial models, but to examine whether the orbit-aware Cox model can provide a meaningful approximation of realistic satellite constellations in terms of their geometric coverage structure.

Fig. \ref{fig:coxslcomparisoncnat30} shows that the mean connection number increases approximately linearly with $c_M$ for both the Cox model and the Starlink-like constellation. This is expected, as a larger $c_M$ expands the coverage region and increases the number of satellites whose coverage cells include the typical user. The two curves exhibit close agreement across the entire range of $c_M$, indicating that the Cox model, when calibrated using visibility statistics, captures the overall trend and scale of the connection number observed in the Starlink-like constellation. The similar growth behavior suggests that the average spatial exposure experienced by the typical user is reasonably approximated by the Cox model. A small but consistent gap is observed, with the Cox model slightly overestimating the connection number. This can be attributed to its isotropic construction, which distributes orbital planes uniformly over all orientations. In contrast, the Starlink-like constellation has fixed inclinations, leading to anisotropic coverage patterns and slightly reduced spatial diversity at the user location.

Overall, the results indicate that the connection number is primarily governed by orbit-induced visibility and coverage geometry, and that the Cox model provides a reasonable first-order approximation of this behavior despite its simplified construction.


\subsection{Extension to Communication Range Distribution}
\begin{figure}
	\centering
	\includegraphics[width=1\linewidth]{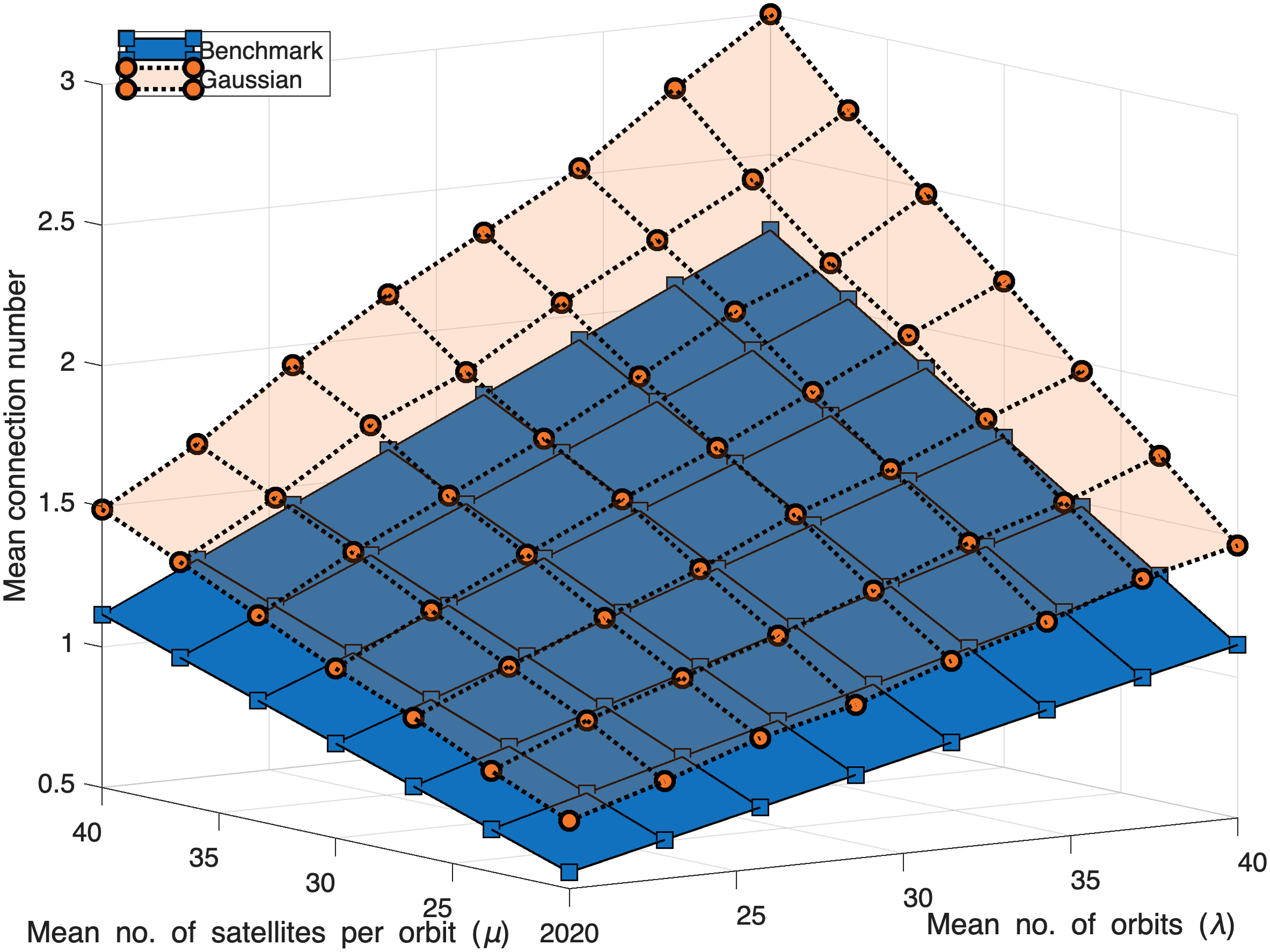}
	\caption{The mean connection number of the typical user with communication ranges for uniform and truncated Gaussian distribution, respectively.}
	\label{fig:discussionb}
\end{figure}

In Section \ref{section_performance_analysis}, we derived the final results as general expressions of $F_{\kappa}(\cdot)$, the CDF of the communication range. Exploiting this expression, we further explore another distribution for the communication range so that the performance of the proposed network architecture can be evaluated under more diverse scenarios. For instance, in this section, we investigate a truncated Gaussian on $[c_m,c_M]$ with mean $\zeta$ and variance $\nu^2$. Specifically, a communication range sample value $z$ is drawn from a Gaussian distribution $\mathcal{N}(\zeta,\nu^2)$, and if $z \leq c_m$, it is rounded up to $c_m$, whereas if $z \geq c_M$, it is rounded down to $c_M$.

Fig. \ref{fig:discussionb} shows the mean connection number obtained by the analytical expression in Theorem \ref{T:3} and by system-level simulations. The colored surface shows the benchmark scenario where the communication range follows an i.i.d. uniform distribution on $[c_m,c_M]$; in contrast, the orange surface with circles illustrates the scenario for Gaussian communication distance with mean ${c_m}/{3}+{2c_M}/{3}$, standard deviation $\sqrt{(c_M-c_m)^2/12}$, truncated between $c_m=550$km and $c_M=916$ km. Here, we consider $r_s=6921$ km, $c_m=550$ km, and $c_M=916$ km.

%

\section{Conclusion}

This paper presents an analytical framework to evaluate the feasibility and performance of satellite infrastructure sharing among multiple LEO or MEO satellite operators. By modeling the orbital geometry and satellite deployment through a Cox point process and by formulating the coverage region as a spherical Cox-Boolean model, the paper establishes a tractable approach to describe the spatial characteristics of heterogeneous constellations.

Using this framework, the connection probability, connection number, and various downlink performance metrics are derived as explicit functions of the system parameters, including the nearest serving distance and the total received signal power, which together provide a geometric characterization of link quality. In addition, a baseline SIR analysis is conducted to illustrate how coverage overlap induced by orbital geometry relates to interference and communication performance. The analytical results are validated through system-level simulations and are further compared with a down-sampled realistic LEO constellation, demonstrating that the proposed model captures the overall trend and scale of key geometric metrics.

Beyond derivations, the developed framework offers practical applicability to the design and evaluation of future satellite infrastructure sharing. The tractable analysis enables the assessment of trade-offs between deployment constraints, coverage overlap, and connectivity reliability. Moreover, our general form of the results allows direct expansion of the analysis to multi-altitude models or arbitrary communication-range distributions, extending its usefulness to various operational environments and system architectures. The framework can thus serve as both a benchmark tool and a design guideline for assessing the performance, scalability, and robustness of shared satellite systems in next-generation NTNs.


\bibliographystyle{IEEEtran}
\bibliography{ref}

\begin{thebibliography}{10}
\providecommand{\url}[1]{#1}
\csname url@samestyle\endcsname
\providecommand{\newblock}{\relax}
\providecommand{\bibinfo}[2]{#2}
\providecommand{\BIBentrySTDinterwordspacing}{\spaceskip=0pt\relax}
\providecommand{\BIBentryALTinterwordstretchfactor}{4}
\providecommand{\BIBentryALTinterwordspacing}{\spaceskip=\fontdimen2\font plus
\BIBentryALTinterwordstretchfactor\fontdimen3\font minus
  \fontdimen4\font\relax}
\providecommand{\BIBforeignlanguage}[2]{{%
\expandafter\ifx\csname l@#1\endcsname\relax
\typeout{** WARNING: IEEEtran.bst: No hyphenation pattern has been}%
\typeout{** loaded for the language `#1'. Using the pattern for}%
\typeout{** the default language instead.}%
\else
\language=\csname l@#1\endcsname
\fi
#2}}
\providecommand{\BIBdecl}{\relax}
\BIBdecl

\bibitem{6934544}
B.~G. Evans, ``The role of satellites in {5G},'' in \emph{Proc. IEEE
  ASMS/SPSC}, 2014, pp. 197--202.

\bibitem{8700141}
Y.~Su, Y.~Liu, Y.~Zhou, J.~Yuan, H.~Cao, and J.~Shi, ``Broadband {LEO}
  satellite communications: Architectures and key technologies,'' \emph{IEEE
  Wireless Commn.}, vol.~26, no.~2, pp. 55--61, 2019.

\bibitem{9275613}
M.~Giordani and M.~Zorzi, ``Non-terrestrial networks in the {6G} era:
  Challenges and opportunities,'' \emph{IEEE Network}, vol.~35, no.~2, pp.
  244--251, 2021.

\bibitem{9861699}
M.~M. Azari, S.~Solanki, S.~Chatzinotas, O.~Kodheli, H.~Sallouha, A.~Colpaert,
  J.~F. Mendoza~Montoya, S.~Pollin, A.~Haqiqatnejad, A.~Mostaani, E.~Lagunas,
  and B.~Ottersten, ``Evolution of non-terrestrial networks from {5G} to {6G}:
  A survey,'' \emph{IEEE Commun. Surv\&Tuts}, vol.~24, no.~4, pp. 2633--2672,
  2022.

\bibitem{9941481}
S.~S. Hassan, D.~H. Kim, Y.~K. Tun, N.~H. Tran, W.~Saad, and C.~S. Hong,
  ``Seamless and energy-efficient maritime coverage in coordinated {6G}
  space–air–sea non-terrestrial networks,'' \emph{IEEE Internet of Things
  J.}, vol.~10, no.~6, pp. 4749--4769, 2023.

\bibitem{8002583}
Z.~Qu, G.~Zhang, H.~Cao, and J.~Xie, ``{LEO} satellite constellation for
  {Internet} of {Things},'' \emph{IEEE Access}, vol.~5, pp. 18\,391--18\,401,
  2017.

\bibitem{FCCKuiper}
\BIBentryALTinterwordspacing
{Federal Communications Commission}. {FCC} authorizes {Kuiper} satellite
  constellation. [Online]. Available:
  \url{https://www.fcc.gov/document/fcc-authorizes-kuiper-satellite-constellation}
\BIBentrySTDinterwordspacing

\bibitem{FCCBoeing}
\BIBentryALTinterwordspacing
------. {FCC} authorizes {Boeing} broadband satellite constellation. [Online].
  Available:
  \url{https://www.fcc.gov/document/fcc-authorizes-boeings-broadband-satellite-constellation}
\BIBentrySTDinterwordspacing

\bibitem{FCCStarlink}
\BIBentryALTinterwordspacing
------. {SpaceX} authorized for {SCS} and operations at lower altitudes.
  [Online]. Available:
  \url{https://www.fcc.gov/document/spacex-authorized-scs-and-operations-lower-altitudes}
\BIBentrySTDinterwordspacing

\bibitem{9371230}
R.~Deng, B.~Di, H.~Zhang, L.~Kuang, and L.~Song, ``Ultra-dense {LEO} satellite
  constellations: How many {LEO} satellites do we need?'' \emph{IEEE Trans.
  Wireless Commun.}, vol.~20, no.~8, pp. 4843--4857, 2021.

\bibitem{9210567}
O.~Kodheli, E.~Lagunas, N.~Maturo, S.~K. Sharma, B.~Shankar, J.~F.~M. Montoya,
  J.~C.~M. Duncan, D.~Spano, S.~Chatzinotas, S.~Kisseleff, J.~Querol, L.~Lei,
  T.~X. Vu, and G.~Goussetis, ``Satellite communications in the new space era:
  A survey and future challenges,'' \emph{IEEE Commun. Surv\&Tuts}, vol.~23,
  no.~1, pp. 70--109, 2021.

\bibitem{okla}
\BIBentryALTinterwordspacing
J.~Fomon. Starlink slowed in {Q2}, competitors mounting challenges. [Online].
  Available:
  \url{https://www.ookla.com/articles/starlink-hughesnet-viasat-performance-q2-2022}
\BIBentrySTDinterwordspacing

\bibitem{9852737}
H.~Al-Hraishawi, H.~Chougrani, S.~Kisseleff, E.~Lagunas, and S.~Chatzinotas,
  ``A survey on nongeostationary satellite systems: The communication
  perspective,'' \emph{IEEE Commun. Surv\&Tuts}, vol.~25, no.~1, pp. 101--132,
  2023.

\bibitem{daley2007introduction}
D.~J. Daley and D.~Vere-Jones, \emph{An Introduction to the Theory of Point
  Processes: Volume {I}: Elementary Theory and Method}.\hskip 1em plus 0.5em
  minus 0.4em\relax Springer, New York, 2003.

\bibitem{chiu2013stochastic}
S.~N. Chiu, D.~Stoyan, W.~S. Kendall, and J.~Mecke, \emph{Stochastic Geometry
  and Its Applications}.\hskip 1em plus 0.5em minus 0.4em\relax John Wiley \&
  Sons, 2013.

\bibitem{baccelli2010stochastic}
F.~Baccelli and B.~B{\l}aszczyszyn, ``Stochastic geometry and wireless
  networks: volume {I} theory,'' \emph{Foundations and Trends® in Networking},
  vol.~3, no. 3--4, pp. 249--449, 2010.

\bibitem{haenggi2012stochastic}
M.~Haenggi, \emph{Stochastic Geometry for Wireless Networks}.\hskip 1em plus
  0.5em minus 0.4em\relax Cambridge University Press, 2012.

\bibitem{11159552}
C.-S. Choi and F.~Baccelli, ``Stochastic geometry and dynamical system analysis
  of {Walker} satellite constellations,'' \emph{early access: IEEE Trans. Veh.
  Technol.}, pp. 1--6, 2025.

\bibitem{6042301}
J.~G. {Andrews}, F.~{Baccelli}, and R.~K. {Ganti}, ``A tractable approach to
  coverage and rate in cellular networks,'' \emph{IEEE Trans. Commun.},
  vol.~59, no.~11, pp. 3122--3134, 2011.

\bibitem{6171996}
H.~S. Dhillon, R.~K. Ganti, F.~Baccelli, and J.~G. Andrews, ``Modeling and
  analysis of k-tier downlink heterogeneous cellular networks,'' \emph{IEEE J.
  Sel. Areas Commun.}, vol.~30, no.~3, pp. 550--560, 2012.

\bibitem{5226963}
F.~Baccelli, B.~Blaszczyszyn, and P.~Muhlethaler, ``Stochastic analysis of
  spatial and opportunistic aloha,'' \emph{IEEE J. Sel. Areas Commun.},
  vol.~27, no.~7, pp. 1105--1119, 2009.

\bibitem{8340239}
V.~V. Chetlur and H.~S. Dhillon, ``Coverage analysis of a vehicular network
  modeled as {Cox} process driven by {Poisson} line process,'' \emph{IEEE
  Trans. Wireless Commun.}, vol.~17, no.~7, pp. 4401--4416, 2018.

\bibitem{8419219}
C.-S. Choi and F.~Baccelli, ``Poisson {Cox} point processes for vehicular
  networks,'' \emph{IEEE Trans. Veh. Technol.}, vol.~67, no.~10, pp.
  10\,160--10\,165, Oct 2018.

\bibitem{9354063}
------, ``Modeling and analysis of vehicle safety message broadcast in cellular
  networks,'' \emph{IEEE Trans. Wireless Commun.}, vol.~20, no.~7, pp.
  4087--4099, 2021.

\bibitem{9079921}
N.~Okati, T.~Riihonen, D.~Korpi, I.~Angervuori, and R.~Wichman, ``Downlink
  coverage and rate analysis of low {Earth} orbit satellite constellations
  using stochastic geometry,'' \emph{IEEE Trans. Commun.}, vol.~68, no.~8, pp.
  5120--5134, 2020.

\bibitem{9177073}
A.~Talgat, M.~A. Kishk, and M.-S. Alouini, ``Nearest neighbor and contact
  distance distribution for binomial point process on spherical surfaces,''
  \emph{IEEE Commun. Lett.}, vol.~24, no.~12, pp. 2659--2663, 2020.

\bibitem{9218989}
------, ``Stochastic geometry-based analysis of {LEO} satellite communication
  systems,'' \emph{IEEE Commun. Lett.}, vol.~25, no.~8, pp. 2458--2462, 2021.

\bibitem{9678973}
D.-H. Jung, J.-G. Ryu, W.-J. Byun, and J.~Choi, ``Performance analysis of
  satellite communication system under the shadowed-{Rician} fading: A
  stochastic geometry approach,'' \emph{IEEE Trans. Commun.}, vol.~70, no.~4,
  pp. 2707--2721, 2022.

\bibitem{kim2024spectrum}
D.~Kim, J.~Park, J.~Choi, and N.~Lee, ``Spectrum sharing between low {Earth}
  orbit satellite and terrestrial networks: A stochastic geometry perspective
  analysis,'' \emph{arXiv preprint arXiv:2408.12145}, 2024.

\bibitem{10901947}
S.~Yang, Y.~Zhu, O.~A. Dobre, G.~K. Karagiannidis, and Z.~Ding, ``Performance
  analysis for {NOMA}-assisted {LEO} communications: A two-dimensional
  stochastic geometric approach,'' \emph{IEEE Trans. Wireless Commun.},
  vol.~24, no.~5, pp. 3822--3836, 2025.

\bibitem{10994488}
R.~Wang, M.~A. Kishk, H.~H. Yang, and M.-S. Alouini, ``Analyzing localizability
  of {LEO/MEO} hybrid networks: A stochastic geometry approach,'' \emph{IEEE
  Trans. Aerosp. Electron. Syst.}, vol.~61, no.~4, pp. 10\,720--10\,736, 2025.

\bibitem{7967745}
V.~V. Chetlur and H.~S. Dhillon, ``Downlink coverage analysis for a finite
  3-{D} wireless network of unmanned aerial vehicles,'' \emph{IEEE Trans.
  Commun.}, vol.~65, no.~10, pp. 4543--4558, 2017.

\bibitem{8713514}
B.~Galkin, J.~Kibiłda, and L.~A. DaSilva, ``A stochastic model for uav
  networks positioned above demand hotspots in urban environments,'' \emph{IEEE
  Trans. Veh. Technol.}, vol.~68, no.~7, pp. 6985--6996, 2019.

\bibitem{8654644}
S.~Zhang, J.~Liu, and W.~Sun, ``Stochastic geometric analysis of multiple
  unmanned aerial vehicle-assisted communications over internet of things,''
  \emph{IEEE Internet Things J.}, vol.~6, no.~3, pp. 5446--5460, 2019.

\bibitem{8681266}
S.~Enayati, H.~Saeedi, H.~Pishro-Nik, and H.~Yanikomeroglu, ``Moving aerial
  base station networks: A stochastic geometry analysis and design
  perspective,'' \emph{IEEE Trans. Wireless Commun.}, vol.~18, no.~6, pp.
  2977--2988, 2019.

\bibitem{9252889}
C.~K. Armeniakos, P.~S. Bithas, and A.~G. Kanatas, ``Sir analysis in {3D} uav
  networks: A stochastic geometry approach,'' \emph{IEEE Access}, vol.~8, pp.
  204\,963--204\,973, 2020.

\bibitem{9785498}
X.~Shi and N.~Deng, ``Modeling and analysis of {mmWave} {UAV} swarm networks: A
  stochastic geometry approach,'' \emph{IEEE Trans. Wireless Commun.}, vol.~21,
  no.~11, pp. 9447--9459, 2022.

\bibitem{10050345}
M.~Matracia, M.~A. Kishk, and M.-S. Alouini, ``Uav-aided post-disaster cellular
  networks: A novel stochastic geometry approach,'' \emph{IEEE Trans. Veh.
  Technol.}, vol.~72, no.~7, pp. 9406--9418, 2023.

\bibitem{10557592}
C.-S. Choi and F.~Baccelli, ``Cox point processes for multi altitude {LEO}
  satellite networks,'' \emph{IEEE Trans. Veh. Technol.}, vol.~73, no.~10, pp.
  15\,916--15\,921, 2024.

\bibitem{10410220}
C.-S. Choi, ``Modeling and analysis of downlink communications in a
  heterogeneous {LEO} satellite network,'' \emph{IEEE Trans. Wireless Commun.},
  vol.~23, no.~8, pp. 8588--8602, 2024.

\bibitem{10703111}
C.-S. Choi and F.~Baccelli, ``A novel analytical model for {LEO} and {MEO}
  satellite networks based on {Cox} point processes,'' \emph{IEEE Trans.
  Commun.}, vol.~73, no.~4, pp. 2265--2279, 2025.

\bibitem{10436110}
C.-S. Choi, ``Analysis of a delay-tolerant data harvest architecture leveraging
  low {Earth} orbit satellite networks,'' \emph{IEEE J. Sel. Areas Commun.},
  vol.~42, no.~5, pp. 1329--1343, 2024.

\bibitem{10771991}
------, ``Leveraging aerial platforms for downlink communications in sparse
  satellite networks,'' \emph{IEEE Internet Things J.}, vol.~12, no.~8, pp.
  9805--9820, 2025.

\bibitem{subCox1}
M.~Ying, X.~Chen, Q.~Qi, and Y.~Xu, ``Modeling and analysis for multiple-layer
  {LEO} satellite internet of things constellations,'' \emph{submitted IEEE
  Trans. Wireless. Commun.}, pp. 1--14, 2025.

\bibitem{molchanov2005theory}
I.~S. Molchanov, \emph{Theory of Random Sets}.\hskip 1em plus 0.5em minus
  0.4em\relax Springer, 2005, vol.~87, no.~2.

\bibitem{38821}
{3GPP TR 38.821}, ``Solutions for {NR} to support non-terrestrial networks
  ({NTN}),'' \emph{3GPP TR 38.821}, Apr. 2024.

\bibitem{38901}
{3GPP TR 38.901}, ``{NR}; study on channel model for frequencies from 0.5 to
  100 {GHz},'' \emph{3GPP TR 38.901}, Sep. 2025.

\bibitem{walker1984satellite}
J.~G. Walker, ``Satellite constellations,'' \emph{Journal of the British
  Interplanetary Society}, vol.~37, pp. 559--572, 1984.

\end{thebibliography}

\appendices
\section{Proof of Theorem \ref{T:1}}\label{A:1}
	Based on the definition of the connection probability, it is given by one minus the outage probability, namely the probability that no satellite coverage cell includes the typical user at $u_0$. We hence have
\begin{align*}
	\bP(\texttt{connection}) &= 1- \bP(\texttt{outage})\\
	&=1- \bP\left(u_0\notin \cC_{i,j}\ \forall X_{i,j}\in\Psi \right)\\
	&=1- \bP(\|X_{i,j}-u_0\|>\kappa_{i,j}, \forall X_{i,j} \in \Psi),
\end{align*}
where $\kappa_{i,j}$ denotes the communication range of the satellite $X_{i,j}$, represented as an i.i.d. uniform between $c_m$ and $c_M.$ Since $\kappa_{i,j}$ is i.i.d., conditionally on $\Psi,$ the connection probability is given by
\begin{align}
	p_c & = 1- \bE\left[ \prod_{ X_{i,j} \in \Psi } \bP \left(\left. \kappa_{i,j}\leq \|X_{i,j}-u_0\|\right| \Psi \right)\right], \nnb
\end{align}
where we use the law of total probability.

Then, we employ the definition of the coverage cell that satellites at distances of more than $c_M$ from $u_0$ do not provide the coverage to the typical user. Moreover, for such satellites, $\bP(\kappa_{i,j}\leq \|X_{i,j}-u_0\|) = 1 $. Consequently, the expectation above can be rewritten as follows:
\begin{align}
	&\bE\left[ \prod_{ X_{i,j} \in \Psi }^{\|X_{i,j}-u_0\|<c_M} \bP \left(\left. \kappa\leq \|X_{i,j}-u_0\|\right| \Psi \right)\right]\nnb\\
	&=\bE\left[ \prod_{ X_{i,j} \in \Psi }^{\|X_{i,j}-u_0\|<c_M} F_{\kappa}(\|X_{i,j}-u_0\|)\right]\nnb\\
	&=\bE\left[ \prod_{ O_{i} \in \cO}^{|\phi_i-\frac{\pi}{2}|<\tilde{\varphi}}\left.\bE\left[ \prod_{X_{i,j}\in\psi_i}^{|\omega_{j}-\frac{\pi}{2}|<\bar{\omega}(\phi_i)} F_{\kappa}(\|X_{i,j}-u_0\|)\right|\cO\right]\right]\nnb,
\end{align}
where we let $F_{\kappa}(x)$ be the CDF of the random variable $\kappa$ evaluated at $x$. We used the fact that conditionally on the orbit process $\cO,$ the satellite point processes on different orbits are independent and thus the expectation of the products becomes the product of conditional expectations. Finally, we use the fact that the $j$-th satellites on $i$-th orbit $O(\theta_i,\phi_i) $ at distances less than $c_M$ have their phases between $\pi/2-\bar{\omega}(\phi_{i}) $ and $\pi/2+\bar{\omega}(\phi_{i}).$ Similarly, we also use the fact that the orbits that may contain satellites at distances less than  $c_M$ from the typical user always have their inclination angles between $\pi/2-\tilde{\varphi}$ and $\pi/2+\tilde{\varphi}$. A simple spherical triangle geometry argument gives the expression for $\tilde{\varphi}$ and $\bar{\omega}(\phi_i)$ as follows:
\begin{align}
	&\tilde{\varphi}= \arccos\left({(r_e^2+r_s^2-c_M^2)}/{(2 r_e r_s)}\right),\label{eq:tilde_varphi}\\
	&\bar{\omega}(\phi_i) = \arcsin\left(\sqrt{ 1-\cos^2(\tilde{\varphi})\csc^2(\phi_{i}) }\right)\label{eq:bar_omega},
\end{align}
respectively. Note $\bar{\omega}(\phi_i)$ depends on the inclination angle of the corresponding orbit and hence it is given as a function of $\phi_i.$ Here, $\bar{\omega}(\phi)$ denotes the phase half-width expressed in terms of the inclination variable $\phi$, whereas $\tilde{\omega}(\varphi)$ in Eq. \eqref{eq:tilde omega} is its counterpart after the change of variables $\varphi = \pi/2-\phi$ performed at the end of this proof, i.e., $\tilde{\omega}(\varphi)=\bar{\omega}(\pi/2-\varphi)$, which converts $\csc^2(\phi)$ into $\sec^2(\varphi)$. With a slight abuse of notation, the distance from the satellite $X_{i,j} $ to the typical user, $\|X_{i,j}-u_0\|$, is given by
\begin{align}
	K(\phi_{i},\omega_{j})&= \sqrt{r_s^2-2 r_sr_e \sin(\phi_{i})\sin(\omega_j)+r_{e}^2}\label{eq:K_ij},
\end{align}
where $\phi_{i}$ is the inclination of the $i$-th orbit and $\omega_j$ is the phase of the $j$-th satellite on the $i$-th orbit.

Using the above equations, we obtain the expression for the outage probability as follows:
\begin{align}
	&\bP(\texttt{outage}) \nnb \\
	&= \bE\left[ \prod_{ O_{i} \in \cO}^{|\phi_i-\frac{\pi}{2}|<\tilde{\varphi}}\bE\left[ \prod_{X_{i,j}\in\psi_i}^{|\omega_{j}-\frac{\pi}{2}|<\bar{\omega}(\phi_i)}\left. F_{\kappa}(K(\phi_{i},\omega_{j}))\right|\cO\right]\right]\nnb\\
	&=\bE\left[ \prod_{ O_{i} \in \cO}^{|\phi_i-\frac{\pi}{2}|<\tilde{\varphi}}\exp\left(-\frac{\mu}{2\pi}\int_{\frac{\pi}{2}-\bar{\omega}(\phi_i)}^{\frac{\pi}{2}+\bar{\omega}(\phi_i)}\bar{F}_{\kappa}(K(\phi_i,\omega))\diff \omega\right)\right]\nnb,
\end{align}
where we use the probability generating functional of the Poisson point process of intensity $\mu$ on the orbit $ O_i$. Here, we denote by $\bar{F}_\kappa(x)$ the complementary CDF of the random variable $\kappa.$

We then apply the probability generating functional of the Poisson point process $\Xi$ of intensity $\lambda\sin(\phi)/(2\pi)$ on the set $\mathbb{T}=[0,\pi]\times [0,\pi]$ to obtain the expression for $\bP(\text{outage})$ as follows:
\begin{align}
	&\exp\left(-\int_{\frac{\pi}{2}-\tilde{\varphi}}^{\frac{\pi}{2}+\tilde{\varphi}}\int_{0}^{\pi}\frac{\lambda\sin(\phi)}{2\pi}\right.\nnb\\
	&\hspace{18mm}	\left.\left(1-e^{-\frac{\mu}{2\pi}\int_{\frac{\pi}{2}-\bar{\omega}(\phi)}^{\frac{\pi}{2}+\bar{\omega}(\phi)}\bar{F}_{\kappa}(K(\phi,\omega))\diff \omega }\right)\diff \theta \diff \phi \right).\nnb
\end{align}
This expression is simplified further using the fact that the outer integral is w.r.t. the variable $\phi $ over the set $[\frac{\pi}{2}-\tilde{\varphi}, \frac{\pi}{2}+\tilde{\varphi}]$ and thus we use the change of variables $\varphi = \pi/2-\phi$ to simplify. Similarly, the inner integration is w.r.t. the variable $\omega$ over the set $[\pi/2-\bar{\omega}(\phi), \pi/2+\bar{\omega}(\phi)]$ and thus we again use the change of variables $\xi = \pi/2-\omega$ to simplify further. Hence, the outage probability is given by
\begin{equation}
	\exp\left({-\int_{0}^{\tilde{\varphi}}{\lambda\cos(\varphi)}\left(1-e^{-\frac{\mu}{\pi}\int_{0}^{\tilde{\omega}(\varphi)}\bar{F}_{\kappa}\left(\tilde{K}\left(\varphi,\eta\right)\right)\diff \eta}\right)\diff \varphi}\right)\nnb,
\end{equation}
where we use two variables as follows:
\begin{align*}
	&\tilde{\omega}(\varphi)= \arcsin\left(\sqrt{ 1-\cos^2(\tilde{\varphi})\sec^2(\varphi) }\right),\nnb\\
	&\tilde{K}(\varphi,\eta)= \sqrt{r_s^2-2 r_sr_e \cos(\varphi)\cos(\eta)+r_{e}^2}.
\end{align*}
This completes the proof.
\section{Proof of Theorem \ref{T:2}}\label{A:2}
	The Laplace transform of the connection number is 
\begin{align}
	\cL_{N_{c}}(t) &=\bE\left[\exp\left(-t\sum_{X_{i,j}\in\Psi}\ind_{u_0\in\cC_{i,j}} \right)\right]\nnb\\
	&= \bE\left[\exp\left(-t\sum_{X_{i,j}\in\Psi}\ind_{\kappa_{i,j}>\|X_{i,j}-u_0\|} \right)\right]\nnb\\
	&=\bE\left[\bE\left[\left.\prod_{ X_{i,j} \in \Psi } \exp\left(-t \ind_{\kappa_{i,j}>\|X_{i,j}-u_0\|}\right)\right| \Psi \right]\right],\nnb
\end{align}
where the summation is now expressed as a product form. Conditionally on $\Psi$, we have
\begin{align}
	\cL_{N_{c}}(t) &= \bE\left[\prod_{ X_{i,j} \in \Psi }^{\|X_{i,j}-u_0\|<c_M} \bE\left[\left. e^{-t \ind_{\kappa_{i,j}>\|X_{i,j}-u_0\|}}\right|\Psi\right]  \right]\nnb\\
	&=\bE\Bigg[\prod_{ X_{i,j} \in \Psi }^{\|X_{i,j}-u_0\|<c_M} \Big(1-(1-e^{-t})\nnb\\
	&\hspace{24mm}\times\bar{F}_{\kappa}(\|X_{i,j}-u_0\|)\Big)\Bigg].
\end{align}
Then, conditionally on the orbit process, the Laplace transform of the connection number is given by
\begin{align}
	&\cL_{N_{c}}(t)\nnb\\
	&=\bE\left[\prod_{O_i\in\cO}\!\bE \left[\left.\prod_{X_{i,j}\in\psi_i}\! \big(1-(1-e^{-t})\bar{F}_{\kappa}(K(\phi_{i},\omega_{j}))\big) \right| \cO\right]\right]\nnb\\
	&=\bE\Bigg[\prod_{O_i\in\cO}^{|\phi_{i}-\frac{\pi}{2}|<\tilde{\varphi}}\!\bE \Bigg[\prod_{X_{i,j}\in\psi_i}^{|\omega_{j}-\frac{\pi}{2}|<\bar{\omega}(\phi_{i})} \!\!\Big(1-(1-e^{-t})\nnb\\
	&\hspace{24mm}\times\bar{F}_{\kappa}(K(\phi_{i},\omega_{j}))\Big)\, \Bigg| \,\cO\Bigg]\Bigg]\nnb,
\end{align}
where $\tilde{\varphi}$  is given by Eq. \eqref{eq:tilde_varphi} and $\bar{\omega}(\phi_{i})$ is given by Eq. \eqref{eq:bar_omega}. We now use the probability generating functional of the Poisson point processes $\cO$ and $\psi_i$ to obtain the Laplace transform of the connection number as follows:
\begin{align}
	&\cL_{N_{c}}(t)=\exp\left(-\int_{\frac{\pi}{2}-\tilde{\varphi}}^{\frac{\pi}{2}+\tilde{\varphi}}\frac{\lambda\sin(\phi)}{2}\right.\nnb\\
	&\hspace{15mm}\left.\left(1-e^{-\frac{\mu}{2\pi}\int_{\frac{\pi}{2}-\bar{\omega}(\phi)}^{\frac{\pi}{2}+\bar{\omega}(\phi)}(1-e^{-t})\bar{F}_{\kappa}(K(\phi,\omega))\diff \omega}\right)\diff \phi\right).\nnb
\end{align}
Finally, the change of variables gives the final result.
\section{Proof of Theorem \ref{T:3}}\label{A:3}
	Using the definition of the connection number, its expectation is given by
\begin{align}
	n_c & = \bE\left[\sum_{X_{i,j}\in\Psi} \ind_{\kappa_{i,j}>\|X_{i,j}-u_0\|}\right]\nnb\\
	&=\bE\left[\left.\sum_{X_{i,j}\in\Psi}\bE\left[ \ind_{\kappa_{i,j}>\|X_{i,j}-u_0\|}\right.|\Psi\right]\right]\nnb\\
	&=\bE\left[\left.\sum_{X_{i,j}\in\Psi}\bar{F}_{\kappa} (\|X_{i,j}-u_0\|)\right.|\Psi\right],
\end{align}
where we use the fact that conditionally on the satellite point process $\Psi,$ the distance from the typical user to the satellite points are function of $\Psi$.
Then, for all $\|X_{i,j}-u_0\|>c_M,$ we have $\bar{F}_{\kappa} = 0.$ Hence, we have
\begin{align}
	n_c &= \bE\left[\sum_{O_i\in\cO}\bE\left[\sum_{X_{j}\in\psi_{i}}\bar{F}_{\kappa} (\|X_{i,j}-u_0\|)\right]\right]\nnb\\
	&=\bE\left[\sum_{O_i\in\cO}\int_{\frac{\pi}{2}-\bar{\omega}(\phi_i)}^{\frac{\pi}{2}+\bar{\omega}(\phi_i)}\frac{\mu \bar{F}_{\kappa}(K(\phi_i,\omega))}{2\pi}\diff \omega\right]\nnb\\
	&=\int_{\pi/2-\tilde{\varphi}}^{\pi/2+\tilde{\varphi}}\frac{\lambda\sin(\phi)}{2}\int_{\frac{\pi}{2}-\bar{\omega}(\phi_i)}^{\frac{\pi}{2}+\bar{\omega}(\phi_i)}\frac{\mu \bar{F}_{\kappa}(K(\phi_i,\omega))}{2\pi}\diff \omega \diff \phi\nnb\\
	&=\frac{\lambda\mu}{\pi}\int_0^{\tilde{\varphi}}\int_{0}^{\tilde{\omega}(\varphi)}\cos(\varphi)\bar{F}_{\kappa}(\tilde{K}(\varphi,\eta))\diff \eta \diff \varphi,
\end{align}
where we leverage the Campbell formula \cite{baccelli2010stochastic,chiu2013stochastic,daley2007introduction} on the Poisson point processes $\cO_i$
and $\psi_{i}$. We then employed the change of variables to obtain the final result.

\section{Proof of Theorem \ref{T:4}}\label{A:4}
	The association distance is the distance from the typical user to its closest satellite out of all satellites whose coverage cells contain the typical user. In addition, the association distance is defined as infinity if there is no satellite that geometrically provides the coverage.

For $c_M\leq z,$ $\bP(Z>z) \equiv \bP(Z=\infty)$ since the maximum communication range for any satellite is assumed to be $c_M.$ Then, the probability of $Z=\infty$ is obtained by one minus the connection probability of the typical user in Theorem \ref{T:1}. Therefore, for $c_M\leq z$, $\bP(Z\leq z) = \bP(\texttt{connection}).$

On the other hand, for $c_m\leq z<c_{M},$ we have
\begin{align}
	\bP(Z\leq z) &= 1-\bP(Z>z).
\end{align}
Let $\bS_e(z)$ denote the spherical cap that the distance from the typical user is less than $z$. $\bS_{e}(z) = \{(x,y,z)\in\bS_{e}|\|(x,y,z)-u_0\|\leq z\}.$ Then, using the fact that the typical user is associated with the nearest one, we have $\bP(Z>z) = \bP(\Psi^p(\bS_e(z))=\emptyset)$
where $\Psi_{p}$ is a thinned point process based on Definition \ref{D:1}. Then, using the proofs of Theorems \ref{T:1} and \ref{T:2}, we have
\begin{align}
	&\bP(\Psi^{p}(\bS_e(z))=\emptyset)\nnb\\
	& = \bP(\|X_{i,j}-u_0\|>z, \forall X_{i,j} \in\Psi^{p})\nnb\\
	&=\bE\left[\prod_{ O_{i} \in \cO} \prod_{X_{i,j}\in\psi_i^p} \ind_{\|X_{i,j}-u_o\|>z}\right]\nnb\\
	&=\bE\left[\prod_{ O_{i} \in \cO} \bE\left[\left.\prod_{X_{i,j}\in\psi_i^p} \ind_{\|X_{i,j}-u_o\|>z}\right| \cO\right]\right],
\end{align}
where $\psi_{i}^p$ denotes the thinned point process of the satellite point process $\psi_i$ on the orbit $O_i$. We use the fact that the intensity measure of the thinned point process is obtained by the original intensity $\mu$ multiplied by the retention probability of a given point at $X_{i,j}$ namely $ \bP(\kappa>K(\phi_i,\omega_j))$, or equivalently, $\bar{F}_{\kappa}(K(\phi_i,\omega_j))$  the CCDF of the random variable $\kappa$ evaluated at $K(\phi_i,\omega_j)$. The above inner expectation is now
\begin{align}
	&\bE\left[\left.\prod_{X_{i,j}\in\psi_i^p} \ind_{\|X_{i,j}-u_o\|>z}\right| \cO\right]\nnb\\
	&=\exp\left(-\frac{\mu}{2\pi}\int_{0}^{2\pi}\ind_{K{(\phi_i,\omega)}\leq z}\bar{F}_{\kappa}(K(\phi_i,\omega))\diff \omega\right)\nnb\\
	&= \exp\left({-{\frac{\mu}{2\pi}\int_{\pi/2-\hat{\omega}(z,\phi_i)}^{\pi/2+\hat{\omega}(z,\phi_i)}\bar{F}_{\kappa}(K(\phi_i,\omega))}}\diff \omega \right)\nnb,
\end{align}
where we use a simple geometric fact that the integral is over the angle $\{\omega\in 2\pi \text{ such that } K(\phi_{i},\omega)\leq z\}$. The corresponding phases are given by $\pi/2-\hat{\omega}(z,\phi_{i})$ and $\pi/2+\hat{\omega}(z,\phi_i)$ where we use the following notation:
\begin{align*}
	\hat\omega(z,\phi_i)&=\arcsin(\sqrt{1-\cos^2(\hat{\varphi}(z))\csc^2(\phi_i))}),\\
	\hat\varphi(z) &=\arccos((r_s^2+r_e^2-z^2)/(2r_sr_e)).
\end{align*}
We apply the probability generating functional of the Poisson point process $\Xi$ of intensity $\lambda\sin(\phi)/2\pi$ on torus $\mathbb{T} $ to have
\begin{align}
	&\bP(\Psi^{p}(\bS_e(z))=\emptyset)\nnb\\
	&=\bE\left[\prod_{ O_{i} \in \cO} \exp\left({-{\frac{\mu}{2\pi}\int_{\pi/2-\hat{\omega}(z,\phi_i)}^{\pi/2+\hat{\omega}(z,\phi_i)}\bar{F}_{\kappa}(K(\phi_i,\omega))}}\diff \omega \right)\right]\nnb\\
	&=\exp\left(\!-{\lambda}\int_{0}^{\hat{\varphi}(z)}\!\!\!\!\!\!\!\cos(\varphi)\left(1-e^{-\frac{\mu}{\pi}\int_{0}^{\hat{\omega}(z,\varphi)}\bar{F}_{\kappa}(\tilde{K} (\varphi,\eta))\diff \eta}\right)\!\diff \varphi\right),\nnb
\end{align}
where we use the following notation:
\begin{align}
	\hat\varphi(z) &=\arccos((r_s^2+r_e^2-z^2)/(2r_sr_e)),\label{27_1}\\
	\hat{\omega}(z,\varphi)&=\arcsin(\sqrt{1-\cos^2(\hat{\varphi}(z))\sec^2(\varphi))})\label{28},\\
	\tilde{K}(\varphi,\eta)&=\sqrt{r_s^2-2 r_sr_e \cos(\varphi)\cos(\eta)+r_{e}^2}.\label{29}
\end{align}
Finally, for $z< c_m, $ we have $c_m = r_{s} - r_e$ and $\bP(Z>z)=1$ and therefore, $\bP(Z\leq z)=0. $ This completes the proof.

\section{Proof of Theorem \ref{T:5}}\label{A:5}
	Conditionally on orbit point process $\cO$, the average total received power, is given by
\begin{align}
	\bE[T]&=\bE\left[\sum_{X_{i,j}\in\Psi^p}\bar{p} H_{i,j}\|X_{i,j}-u_0\|^{-\alpha} \right]\nnb\\
	&=\bE\left[\sum_{O_i\in\cO}\bar{p} \bE\left[\left.\sum_{X_{i,j}\in\psi^p} \frac{1}{\|X_{i,j}-u_0\|^{\alpha}}\right|\cO\right]\right]\nnb\\
	&=\bE \left[ \sum_{O_i\in\cO}\left.\bar{p} \int_{\frac{\pi}{2}-\bar{\omega}(\phi_i)}^{\frac{\pi}{2}+\bar{\omega}(\phi_i)}\frac{\mu \bar{F}_{\kappa}(K(\theta_i,\omega))}{2\pi K^{\alpha}(\theta_i,\omega)}\diff \omega\right.\right]\nnb\\
	&= \int_{\frac{\pi}{2}-\tilde{\varphi}}^{\frac{\pi}{2}+\tilde{\varphi}}\!\frac{\lambda\mu \bar{p} \sin(\phi)}{2\pi}\int_{\frac{\pi}{2}-\bar{\omega}(\phi)}^{\frac{\pi}{2}+\bar{\omega}(\phi)}\frac{\bar{F}_{\kappa}(K(\phi,\omega))}{K^{\alpha}(\phi,\omega)}\diff \omega\diff \phi\nnb\\
	&=\int_{0}^{\tilde{\varphi}}\!\frac{\lambda\mu \bar{p} \cos(\varphi)}{\pi}\int_{0}^{\tilde{\omega}(\varphi)}\frac{\bar{F}_{\kappa}(\tilde{K}(\varphi,\eta))}{\tilde{K}^{\alpha}(\varphi,\eta)}\diff \eta\diff \varphi,\nnb
\end{align}
where we use the Campbell formula in \cite{baccelli2010stochastic} twice for the Poisson point processes $\psi_{i}^p$ and then for $\cO$, respectively. Specifically, the inner integral is w.r.t. the thinned Poisson point process $\psi_{i}^p$ and its intensity measure is given by the product of the original intensity parameter $\mu$ and the retention probability $\bar{F}_{\kappa}(K(\theta_{i},\omega_{j})) = \bP(\kappa_{i,j}>\|X_{i,j}-u_0\|)=1-F_{\kappa}(K(\theta_i,\omega_{j})).$ Then, we obtain the final result by change of variables. Here, we use the following notation: $K^{\alpha}(\phi,\omega) = (K(\phi,\omega))^{\alpha}.$

On the other hand, conditionally on $\cO$ and $\psi_{i},$ the Laplace transform of the total received power is
\begin{align}
	\cL_{T}(s)
	&=\bE\left[e^{\left.-\sum_{X_{i,j}\in\Psi^p }\frac{s\bar{p}H_{i,j}}{{\|X_{i,j}-u_0\|^{\alpha}}} \right.}\right]\nnb\\
	&=\bE\left[\prod_{ O_{i} \in \cO}\bE\left[\left.\prod_{X_{i,j}\in\psi_i^p}\bE\left[\left.e^{-\frac{s\bar{p}H_{i,j}}{\|X_{i,j}-u_0\|^{\alpha}}}\right|\psi^p\right]\right|\cO\right]\right]\nnb\\
	&=\bE\left[\prod_{ O_{i} \in \cO}\bE\left[\left.\prod_{X_{i,j}\in\psi_i^p} \cL_{H}\left(\frac{s\bar{p} }{K^{\alpha}(\phi_{i},\omega_{j})} \right)\right|\cO\right]\right]\nnb,
\end{align}
where we use  $\cL_H(s)$.
Now, we use the probability generating functional of the Poisson point processes \cite{baccelli2010stochastic} to obtain the following expressions:
\begin{align}
	\cL_T(s)
	&= \bE\left[\prod_{ O_{i} \in \cO} e^{-\frac{\mu}{2\pi}\int_{\frac{\pi}{2}-\bar{\omega}(\phi_i)}^{\frac{\pi}{2}+\bar{\omega}(\phi_i)}1-\cL_{H}(s\bar{p}(K(\phi_{i},\omega))^{-\alpha})\diff \omega}\right]\nnb\\
	&=e^{-{\lambda}\int_{0}^{\tilde{\varphi}}\cos(\varphi)\left(1-e^{\left.-\frac{\mu}{\pi}\int_{0}^{\tilde{\omega}(\varphi)}1-\cL_{H}\left(\frac{s\bar{p}}{\tilde{K}^{\alpha}(\varphi,\eta)}\right)\diff \eta\right.}\right) \diff \varphi}\nnb.
\end{align}
This completes the Laplace transform.

Finally, using the distribution of the normalized association distance in Corollary \ref{C:1}, the average receive signal power is
\begin{align}
	s_0 &= \bE\left[ \bar{p} H\|\tilde{Z}\|^{-\alpha}\right]
	=\bar{p} \left.\int_{0}^{\infty}u^{-\alpha}\frac{\partial \bP(\tilde{Z}\leq u)}{\partial u}\diff u \right.,
\end{align}
where we use the independence between $\tilde{Z}$ and $H$. Then, we leverage Corollary \ref{C:1} to obtain the final result.

\end{document}